\newtheorem{remark}{Remark}
\newtheorem{theorem}{Theorem}
\newtheorem{lemma}{Lemma}
\newtheorem{corollary}{Corollary}
\newtheorem{proposition}{Proposition}
\NewDocumentCommand{\multiubrace}{mmm}
 {
  \egreg_multiubrace:nnn {#1} {#2} {#3}
 }
 \def\mathbi#1{\textbf{\em #1}} 
\title{Joint DOA and Attitude Sensing Based on Tri-Polarized Continuous Aperture Array}
\author{Haonan Si,~\IEEEmembership{Graduate Student Member, IEEE}, Zhaolin Wang, ~\IEEEmembership{Member, IEEE},\\	Xiansheng Guo, ~\IEEEmembership{Senior Member, IEEE}, Jin Zhang, ~\IEEEmembership{Member, IEEE}, Yuanwei Liu, ~\IEEEmembership{Fellow, IEEE}
	\thanks{This work was supported by the National Natural Science Foundation of China under Grant 62171086. \emph{(Corresponding author: Xiansheng Guo.)}}
	\thanks{Haonan Si and Xiansheng Guo are with the Department of Electronic Engineering, University of Electronic Science and Technology of China, Chengdu, 611731, China (e-mail: sihaonan@std.uestc.edu.cn,  xsguo@uestc.edu.cn).}
	\thanks{Zhaolin Wang and Yuanwei Liu are with the Department of Electrical and Electronic Engineering, The University of Hong Kong, Hong Kong  (e-mail: zhaolin.wang@hku.hk, yuanwei@hku.hk).}
	\thanks{Jin Zhang is with the School of Electronic Engineering and Computer Science, Queen Mary University of London, London E1 4NS, U.K. (e-mail: jin.zhang@qmul.ac.uk).}
}
\begin{document}

\maketitle
\begin{abstract}
	This paper investigates joint direction-of-arrival (DOA) and attitude sensing using tri-polarized continuous aperture arrays (CAPAs). By employing electromagnetic (EM) information theory, the spatially continuous received signals in tri-polarized CAPA are modeled, thereby enabling accurate DOA and attitude estimation. To facilitate subspace decomposition for continuous operators, an equivalent continuous–discrete transformation technique is developed. Moreover, both self- and cross-covariances of tri-polarized signals are exploited to construct a tri-polarized spectrum, significantly enhancing DOA estimation performance. Theoretical analyses reveal that the identifiability of attitude information fundamentally depends on the availability of prior target snapshots. Accordingly, two attitude estimation algorithms are proposed: one capable of estimating partial attitude information without prior knowledge, and the other achieving full attitude estimation when such knowledge is available. Numerical results demonstrate the feasibility and superiority of the proposed framework.
\end{abstract}

\begin{IEEEkeywords}
    Attitude sensing, continuous aperture array (CAPA), direction of arrival (DOA), electromagnetic (EM) information theory.
\end{IEEEkeywords}

\section{Introduction} \label{sec:intro}

\IEEEPARstart{T}he evolution towards 6G networks imposes higher requirements on sensing systems, particularly in terms of resolution and accuracy. Such capabilities are essential for a wide range of applications, including position-based services, multi-agent networks, integrated sensing and communication, and digital twin systems \cite{10816694,10918620,10255711,10906057}. The sensing performance, however, is fundamentally constrained by the underlying antenna array paradigm. The introduction of multi-input multi-output (MIMO) technology has significantly expanded the spatial degrees of freedom (DoF), thereby enabling substantial improvements in sensing resolution and estimation accuracy \cite{7533517,6922542}. 

Most existing sensing algorithms, however, are developed under the far-field uniform plane-wave assumption, which stems from a physically inconsistent channel modeling approach \cite{10678869}. Under this assumption, the received signals are simplified as transmitting from point targets in free space, such that only directional information (i.e., direction-of-arrival, DOA) can be extracted. In contrast, practical electromagnetic (EM) propagation inherently depends not only on the target’s location but also on its orientation (or attitude) \cite{yang20153}, which modulates the propagation channel and renders attitude estimation feasible. This observation highlights the limitations of conventional DOA-only frameworks and motivates the development of joint DOA and attitude sensing approaches for more general and fine-grained scenarios.

\subsection{Prior Works}

EM information theory \cite{zhu2024electromagnetic} has emerged as an effective framework for modeling radiating signals from the fundamental Maxwell’s equations. Building on this theory, the authors of  \cite{miller2000communicating} employed eigenfunction analysis to demonstrate that the spatial DoF of a holograph MIMO is proportional to its physical volume. Subsequent studies \cite{poon2005degrees,dardari2020communicating} extended this result to more general scenarios and derived more comprehensive analytical expressions. These modeling insights have shifted research attention toward EM information theory-based sensing system analyses. For example, the authors of  \cite{zhang2022holographic} proposed a beamforming scheme that maximizes beampattern gain in the target direction, thereby improving sensing accuracy. Similarly, the authors of  \cite{d2022cramer} analyzed the performance limits of holograph-based sensing systems under both known and unknown target orientations. However, these works primarily focus on system-level optimization and performance analysis, while offering limited contributions to detailed algorithmic design, particularly in scenarios involving multiple targets.

To further unleash the potential of sensing systems, integrating an ever-larger number of antenna elements within a constrained physical aperture has become a natural evolution. Following this trend, wireless systems are transitioning from massive MIMO \cite{9757375} to gigantic MIMO \cite{11026007}, paving the way for the next generation of networks. More recently, the concept of continuous aperture antenna (CAPA) \cite{11095329} has emerged as a transformative paradigm, moving beyond discrete arrays toward realizing a physically continuous EM aperture. Owing to its extraordinary spatial DoF, CAPA has attracted significant research interest in both wireless communications and sensing \cite{11095329,11015930}. Nevertheless, the unique paradigm of continuous EM radiation introduces both opportunities and challenges, especially with respect to continuous EM modeling and sensing algorithm design.

To fill this gap, the authors of \cite{si2025doa} developed a CAPA-based MUSIC algorithm for direction-of-arrival (DOA) estimation, achieving significant performance improvements in both theoretical derivations and simulation studies. Moreover, the authors of \cite{jiang2024cram} proposed a maximum likelihood estimation (MLE) method for near-field target localization, where the continuous source current is optimized by minimizing the Cramér-Rao lower bound (CRLB). Notably, the sensing objectives in these studies are primarily limited to location (near-field) and DOA (far-field) estimation, where the target is simplified as a point in free space. In contrast, practical applications often demand the extraction of more fine-grained target attributes (such as attitude information) from the sensing system.

Currently, research on attitude sensing remains in its early stage. As analyzed in \cite{10678869}, tri-polarized antennas are capable of simultaneously receiving independent polarized signals along three orthogonal directions, as expected to substantially enhance sensing capability of CAPA systems. The authors of \cite{10556596} proposed a near-field positioning and attitude sensing algorithm with closed-form, low-complexity solutions. However, this method has two main limitations: 1) it assumes a single target with prior knowledge of the target’s current and length; 2) it refines the tri-polarized signal into a scalar one via the Poynting vector, which inevitably discards part of the target information and constrains the sensing performance.  These limitations motivate us to develop a joint DOA and attitude sensing algorithm for tri-polarized CAPA systems under more general and practical scenarios.

\subsection{Motivation and Contributions}

As discussed above, despite the promising DoF and potential performance gains offered by CAPA, algorithmic design for CAPA-based sensing remains largely underexplored, and the spatially continuous aperture poses significant challenges for target sensing. Moreover, most existing CAPA algorithms focus exclusively on DOA estimation and typically assume the presence of a single target. In practice, however, attitude information is also of great importance. To address this challenge, several insightful studies have exploited EM information theory to model received signals, which provide richer sensing information and enable attitude estimation \cite{10556596, 8240645}. Nevertheless, these methods often rely on restrictive assumptions and neglect the rich information contained in tri-polarized received signals.

To overcome these challenges, we propose a joint DOA and attitude sensing algorithm for CAPA systems under more general scenarios. Specifically, we develop an equivalent eigenvalue decomposition method to address the difficulty of subspace decomposition with spatially continuous apertures. Based on this, both self- and cross-covariance matrices are leveraged to construct a tri-polarized spectrum, enabling DOA estimation with improved resolution and robustness. Subsequently, we design an attitude estimation algorithm for two cases: with and without a prior knowledge of target snapshots. The main contributions of this paper are summarized as follows:
\begin{itemize}
	\item We establish an EM information-theoretic model for the spatially continuous received signal in tri-polarized CAPA-based sensing. Building upon this model, we propose a unified framework for joint DOA and attitude estimation that accommodates more general and practical sensing scenarios.
	\item To tackle the challenge of subspace decomposition with spatially continuous signals, we develop an equivalent continuous–discrete transformation technique that enables subspace decomposition for continuous operators. Leveraging this formulation, we further exploit both self- and cross-covariances of tri-polarized signals to construct a novel tri-polarized spectrum, which significantly enhances the resolution and robustness of DOA estimation.
	\item Based on the DOA estimates, we conduct a theoretical identifiability analysis of target attitude information, demonstrating its fundamental dependence on the availability of prior target snapshots. Accordingly, we design tailored algorithms for both cases, with and without prior knowledge, to estimate partial or full attitude information, thereby extending CAPA sensing capabilities beyond conventional DOA estimation to more general and practical scenarios.
\end{itemize}

\subsection{Organization and Notations}
 The remainder of this paper is organized as follows. Section \ref{sec:model} presents the system model and problem formulation for tri-polarized CAPA systems. Section \ref{sec:algorithm} elaborates on the proposed DOA and attitude estimation algorithms. Section \ref{sec:results} presents simulation results of the proposed method as well as benchmark methods under various system settings. Section \ref{sec:conclusion} finally concludes this paper.

\textit{Notations:} Scalars, column vectors, matrices, and sets are denoted by $a$, $\mathbf{a}$, $\mathbf{A}$, and $\mathcal{A}$, respectively. The transpose, conjugate, Hermitian, and inverse of a matrix are denoted by $(\cdot)^\mathsf{T}$, $(\cdot)^\ast$, $(\cdot)^\mathsf{H}$, and $(\cdot)^{-1}$, respectively. The real and imaginary parts of a complex variable are written as $\Re{(\cdot)}$ and $\Im{(\cdot)}$. The spaces of real and complex $M\times N$ matrices are denoted by $\mathbb{R}^{M\times N}$ and $\mathbb{C}^{M\times N}$, respectively. The $(m,n)$-th entry of a matrix and the $m$-th entry of a vector are denoted by $[(\cdot)]_{m,n}$ and $[(\star)]_{m}$, respectively, while $[(\cdot)]_{m,:}$ and $[(\cdot)]_{:,n}$ denote the $m$-th row and $n$-th column of a matrix. The Euclidean norm of a vector $(\star)$ is written as $\|(\star)\|_2$. The Lebesgue measure of a set $\mathcal{S}$ is denoted by $|\mathcal{S}|$. The imaginary unit is represented by $\mathsf{j}=\sqrt{-1}$.
\section{System Model and Problem Formulation} \label{sec:model}

\subsection{System Description}
The system of this work is illustrated in Fig. \ref{system_model}, where a CAPA is placed on the XOY plane to receive the signal. The center of CAPA is set at the origin of the coordinate system and its two sides are parallel to the X- and Y-axes, with the length being $L_x$ and $L_y$, respectively. Hence, the CAPA aperture size is $L_x\times L_y$. An arbitrary point on the CAPA is formulated as $\mathbf{r}\in \mathcal{S}$ and $\mathcal{S}$ denotes the coordinate region of CAPA, which is specified by: 
\begin{align}
		\mathcal{S}=\left\{\mathbf{r} \triangleq\left[r_x, r_y, r_z\right]^{\mathsf{T}}:-\frac{L_x}{2} \leq r_x \leq \frac{L_x}{2},\right. \nonumber\\
		\left.-\frac{L_y}{2} \leq r_y \leq \frac{L_y}{2}, r_z=0\right\}.
\end{align}

As shown in Fig. \ref{system_model},  we consider the target of interest to be a \textit{Hertzian dipole}. Suppose there are $M$ targets in total, with the $m$-th Hertzian dipole (target)  of length $l_m$ and located at position $\mathbf{p}_m = [p_{m,x}, p_{m,y}, p_{m,z}]^\mathsf{T}$, where $m \in \mathcal{M} = \{1, 2, \ldots, M\}$. Here, $p_{m,x}$, $p_{m,y}$, and $p_{m,z}$ denote the coordinates along the X-, Y-, and Z-axes, respectively. More specifically, the $m$-th target is pointed in the direction of $\bar{\mathbf{q}}_m=q_{m,x}\bar{\mathbf{x}}+q_{m,y}\bar{\mathbf{y}}+q_{m,z}\bar{\mathbf{z}}$, with $\Vert \bar{\mathbf{q}}_m\Vert =1$, and $\bar{\mathbf{x}}$, $\bar{\mathbf{y}}$, and $\bar{\mathbf{z}}$ being the unit vectors along the X-, Y-, and Z-axes, respectively. Notably, $\bar{\mathbf{q}}_m$ represents the rotation of the target in the free space, i.e., the attitude information of the $m$-th target. The targets are assumed to be located in the radiating far-field region, i.e., $z_0> \frac{2D^2}{\lambda}$, where $z_0$ is the minimum distance from the CAPA to the target, $D$ is the aperture size, $\lambda$ denotes the wavelength, respectively.

\begin{figure}[t!]
    \centering
    \includegraphics[width=0.48\textwidth]{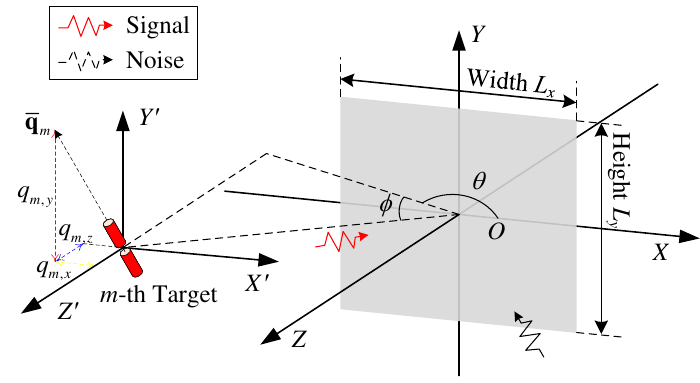}
    \caption{Illustration of the point-to-point CAPA-MIMO system.}
    \label{system_model}
\end{figure}

\subsection{Electromagnetic	Propagation Modeling}
 In this subsection, we derive the relationship between the target source currents and the received electric field on the CAPA using EM information theory.

 To model the relationship between the source current $\mathbf{J}(\mathbf{p}_v)$ and the electric field vector $\mathbf{e}(\mathbf{p}_v)$ at an arbitrary point $\mathbf{p}_v\in\mathbb{R}^3$, we involve the following inhomogeneous Helmholtz equation \cite{1386525}:
 
 \begin{align}\label{eq:1}
 	-\nabla_{\mathbf{p}_v} \times \nabla_{\mathbf{p}_v} \times \mathbf{e}\left(\mathbf{p}_v\right)+k^2 \mathbf{e}\left(\mathbf{p}_v\right)=\mathsf{j} k \eta \mathbf{J}\left(\mathbf{p}_v\right),
 \end{align}
 where $\nabla_{\mathbf{p}_v}$ denotes the curl operation with respect to $\mathbf{p}_v$, $k=\frac{2\pi}{\lambda}$ and $\eta$ represent the wavenumber and intrinsic impedance, respectively.
 
 Following the EM modeling methods \cite{10556596}, we utilize the Green's function to derive the inverse map of Eq. (\ref{eq:1}), with the electric field at the receive formulated as
 \begin{align}\label{Eq:3}
 	\mathbf{e}(\mathbf{r})=\int_{\mathbb{R}^2}\mathbf{G}(\mathbf{z})\mathbf{J}(\mathbf{p}_v)d\mathbf{p}_v,
 \end{align}
 where $\mathbf{z}=\mathbf{r}-\mathbf{p}_v$ denotes the radiation vector and $\mathbf{G}(\mathbf{z})$ is the tensor Green's function, which is widely used in electromagnetic analyses and can be formulated as
 \begin{align}
 	\mathbf{G}(\mathbf{z})=G_s(z)\left[\left(1+\frac{\mathsf{j}}{k z}-\frac{1}{k^2 z^2}\right) \mathbf{I}\right.\nonumber\\
 	\left.-\left(1+\frac{\mathsf{j} 3}{k z}-\frac{3}{k^2 z^2}\right) \bar{\mathbf{z}} \bar{\mathbf{z}}^{\mathsf{H}}\right],
 \end{align}
 where $z=||\mathbf{z}||_{2}$, $\bar{\mathbf{z}}=\frac{\mathbf{z}}{z}$, and $G_s(z)=\mathsf{j}\frac{\eta}{2\lambda z}e^{\mathsf{j}kz}$ is the scalar Green's function, respectively. In the far-field scenarios, the terms $\frac{\mathsf{j}}{kz}$ and $\frac{1}{k^2z^2}$ can be neglected \cite{10556596}, which further generates:
 \begin{align}\label{eq:5}
 	\mathbf{G}(\mathbf{z})\approx G_s(z)(\mathbf{I}-\bar{\mathbf{z}} \bar{\mathbf{z}}^\mathsf{H}).
 \end{align}
 Substituting Eq. (\ref{eq:5}) into Eq. (\ref{Eq:3}), it can be derived that:
 \begin{align}\label{eq:6}
 	\mathbf{e}(\mathbf{r})&=\int_{\mathbb{R}^2}G_s(z)(\mathbf{I}-\bar{\mathbf{z}} \bar{\mathbf{z}}^\mathsf{H})\mathbf{J}(\mathbf{p}_v)d\mathbf{p}_v\nonumber\\
 	&=\sum_{m=1}^{M}G_s(z_m)\left(\mathbf{I}-\bar{\mathbf{z}}_m\bar{\mathbf{z}}_m^\mathsf{T}\right) \int_{\mathbb{R}^2} \mathbf{J}(\mathbf{p}_v) d\mathbf{p}_v,
 \end{align}
 where $z_m=||\mathbf{z}_m||_2$, $\mathbf{z}_m=\mathbf{r}-\mathbf{p}_m$, and  $\bar{\mathbf{z}}_m=\frac{\mathbf{z}_m}{z_m}$. For the considered dipole target, the current $\mathbf{J}(\mathbf{p}_v)$ can be rewritten into $\mathbf{J}(\mathbf{p}_v)=I_m(t)l_m\delta(\mathbf{p}_v)\bar{\mathbf{q}}_m$, with $I_m(t)$ being the time-varying current level in the $m$-th dipole at time $t$. Subsequently, the received electric field at $\mathbf{r}$ and time $t$ can be rewritten as
 \begin{align}\label{eq:7}
 	\mathbf{e}(\mathbf{r},t)&=\sum_{m=1}^{M}G_s(z_m)I_m(t)l_m\left(\mathbf{I}-\bar{\mathbf{z}}_m \bar{\mathbf{z}}_m^\mathsf{T}\right)\int_{\mathbb{R}^2}  \delta(\mathbf{p}_v) d\mathbf{p}_v\nonumber\\
 	&=\sum_{m=1}^{M}G_s(z_m)I_m(t)l_m\underbrace{\left(\bar{\mathbf{q}}_m-\bar{\mathbf{z}}_m\bar{\mathbf{z}}_m^\mathsf{T}\bar{\mathbf{q}}_m \right)}_{\mathbf{v}_m},
 \end{align} 
 where $\mathbf{v}_m\triangleq\bar{\mathbf{q}}_m-\bar{\mathbf{z}}_m\bar{\mathbf{z}}_m^\mathsf{T}\bar{\mathbf{q}}_m\in\mathbb{R}^{3}$ is defined as the polarization vector.
 Notably, the orientation vector $\bar{\mathbf{z}}_m$ can be expressed as $\bar{\mathbf{z}}_m=\cos \theta_m\cos\phi_m \bar{\mathbf{x}}+\sin\theta_m\cos\phi_m\bar{\mathbf{y}}+\sin\phi_m\bar{\mathbf{z}}$, such that Eq. (\ref{eq:7}) can be rewritten as Eqs. (\ref{eq:8})-(\ref{eq:10}), where $e_x(\mathbf{r,t})\triangleq \mathbf{e}^\mathsf{T}(\mathbf{r},t)\bar{\mathbf{x}}$, $e_y(\mathbf{r,t})\triangleq \mathbf{e}^\mathsf{T}(\mathbf{r},t)\bar{\mathbf{y}}$, $e_z(\mathbf{r,t})\triangleq \mathbf{e}^\mathsf{T}(\mathbf{r},t)\bar{\mathbf{z}}$.
 
 \begin{figure*}[b]
 	\hrulefill
 	\begin{align}\label{eq:8}
 		&e_x(\mathbf{r},t)=\sum_{m=1}^{M}G_s(z_m)I_m(t)l_m\left[q_{m,x}-\left(\cos \theta_m\cos\phi_m q_{m,x}+\sin\theta_m\cos\phi_mq_{m,y}+\sin\phi_mq_{m,z}\right)\cos \theta_m\cos\phi_m\right],\\  \label{eq:9}
 		&e_y(\mathbf{r},t)=\sum_{m=1}^{M}G_s(z_m)I_m(t)l_m\left[q_{m,y}-\left(\cos \theta_m\cos\phi_m q_{m,x}+\sin\theta_m\cos\phi_mq_{m,y}+\sin\phi_mq_{m,z}\right)\sin\theta_m\cos\phi_m\right],\\ \label{eq:10}
 		&e_z(\mathbf{r},t)=\sum_{m=1}^{M}G_s(z_m)I_m(t)l_m\left[q_{m,z}-\left(\cos \theta_m\cos\phi_m q_{m,x}+\sin\theta_m\cos\phi_mq_{m,y}+\sin\phi_mq_{m,z}\right)\sin\phi_m\right].
 	\end{align}
 \end{figure*}

To facilitate the following signal modeling and algorithm design, we reformulate $G_s(z_m)$ by using the planar wave approximation. To begin with, $z_m$ can be reformulated as follows:
\begin{align}
	z_m&=\|\mathbf{r}-\mathbf{p}_m \|_2 \nonumber \\
	&=p_m\sqrt{1-2\frac{\mathbf{r}^\mathsf{T}\bar{\mathbf{p}}_m}{p_m}+\frac{r^2}{p_m^2}} \nonumber\\
	&\stackrel{(a)}{\approx} p_m-\mathbf{r}^\mathsf{T}\bar{\mathbf{p}}_m+\frac{\left[1+(\mathbf{r}^\mathsf{T}\bar{\mathbf{p}}_m)^2\right]r^2}{2p_m}, \label{eq:appr}
\end{align}
where $p_m=\|\mathbf{p}_m\|_2$, $r=\|\mathbf{r}\|_2$, $\bar{\mathbf{p}}_m=\mathbf{p}_m/p_m$, and approximation $(a)$ stems from the Taylor expansion the Taylor series expansion
$\sqrt{1+x} \approx 1 + \frac{x}{2} - \frac{x^2}{8}$,
which is valid for $|x| \ll 1$. By defining the wave vector $\mathbf{d}(\theta_m,\phi_m)=[\cos\theta_m\cos\phi_m,\sin\theta_m\cos\phi_m,\sin\phi_m]^\mathsf{T}$ and neglecting the 2-order terms, $G_s(z_m)$ can be further formulated as
\begin{align}\label{eq:12}
	G_s(z_m)\approx\mathsf{j}\frac{\eta}{2\lambda p_m}e^{\mathsf{j}kp_m}e^{-\mathsf{j}k\mathbf{r}^\mathsf{T}\mathbf{d}(\theta_m,\phi_m)}.
\end{align}
By substituting Eq. \eqref{eq:12} into \eqref{eq:7}, the electric field vector at time $t$ can be rewritten as
\begin{align}
	\mathbf{e}(\mathbf{r},t)=\sum_{m=1}^{M}s_m(t)a(\mathbf{r},\theta_m,\phi_m)\mathbf{v}_m,
\end{align}
where $s_m(t)=\mathsf{j}I_m(t)l_m\frac{\eta}{2\lambda p_m}e^{\mathsf{j}kp_m}$ denotes the snapshot signal of the $m$-th target and $a(\mathbf{r},\theta_m,\phi_m)=e^{-\mathsf{j}k\mathbf{r}^\mathsf{T}\mathbf{d}(\theta_m,\phi_m)}$.

\subsection{CAPA Receiver Modeling}
In this paper, we assume that the excited electric field is subject to a spatially uncorrelated zero-mean complex Gaussian process $n(\mathbf{r},t)$ \cite{10910020}, with the correlation function formulated as
\begin{align}
	&\mathbb{E}\left\{ \mathbf{n}(\mathbf{r}_1,t),\mathbf{n}^\mathsf{H}(\mathbf{r}_2,t)\right\}=\sigma^2\delta(\mathbf{r}_1-\mathbf{r}_2)\mathbf{I}_3,\nonumber\\
	&\forall \mathbf{r}_1,\mathbf{r}_2\in \mathcal{S},
\end{align}
where $\sigma^2$ denotes the noise covariance. To promote the following algorithm design, we equivalently transform the continuous aperture into an infinite number of infinitesimal non-overlapping receiving units, which is formulated as $\mathcal{S}=\bigcup_{n=1}^{N}\mathcal{S}_n$ and $N\rightarrow\infty$. Then, the voltage observed at the $n$-th unit can be obtained by integrating over the region as
\begin{align}
	\mathbf{x}_n(t)=&\int_{\mathcal{S}_n}\mathbf{e}(\mathbf{r},t)d\mathbf{r}\nonumber\\
	\approx&|\mathcal{S}_n|\left[\sum_{m=1}^{M}a(\mathbf{r}_n,\theta_m,\phi_m)s_m(t)\mathbf{v}_m+\mathbf{n}(\mathbf{r}_n,t)\right] \nonumber \\
	=&|\mathcal{S}_n|\left[\mathbf{V}^\mathsf{T}\mathbf{S}(t)\mathbf{a}(\mathbf{r}_n,\boldsymbol{\theta},\boldsymbol{\phi}) +\mathbf{n}(\mathbf{r}_n,t)\right],
\end{align}
where $n\in\mathcal{N}$, $\mathcal{N}=\left\{1,2,...,N\right\}$, $\mathbf{r}_n\in\mathcal{S}_n$ is an arbitrary point within the $n$-th region, 
\begin{align}
	&\mathbf{a}(\mathbf{r}_n,\boldsymbol{\theta},\boldsymbol{\phi})\nonumber\\
	=&\left[a(\mathbf{r}_n,{\theta}_1,{\phi}_1),a(\mathbf{r}_n,{\theta}_2,{\phi}_2),..., a(\mathbf{r}_n,{\theta}_M,{\phi}_M)\right]^\mathsf{T},
\end{align} $\mathbf{S}(t)=\text{diag}\left\{s_1(t),s_2(t),...,s_m(t)\right\}$, $\mathbf{V}=\left[\mathbf{v}_1,\mathbf{v}_2,...,\mathbf{v}_M\right]^\mathsf{T}\in\mathbb{R}^{M\times 3}$, $\boldsymbol{\theta}=\left[\theta_1,\theta_2,...,\theta_M\right]^\mathsf{T}$and $\boldsymbol{\phi}=\left[\phi_1,\phi_2,...,\phi_M\right]^\mathsf{T}$ denote the azimuth and elevation vectors, respectively.
Due to the fact that CAPA has theoretically infinite array units, we assume that the areas of these $N$ regions are the same, i.e., $\forall n\in\mathcal{N}$, $\lim_{N\rightarrow\infty}|\mathcal{S}_n|=A_0\rightarrow 0$. By concatenating  $\mathbf{x}_n(t)$, $\mathbf{a}(\mathbf{r}_n,\boldsymbol{\theta},\boldsymbol{\phi})$, and $\mathbf{n}(\mathbf{r}_n,t)$ across $n\in\mathcal{N}$, we derive the following matrices:
\begin{align}
	&\mathbf{X}(t)=\left[\mathbf{x}_1(t),\mathbf{x}_2(t),...,\mathbf{x}_N(t)\right]^\mathsf{T}\in\mathbb{C}^{N\times 3},\\
	&\mathbf{A}(\boldsymbol{\theta},\boldsymbol{\phi})=\left[\mathbf{a}(\mathbf{r}_1,\boldsymbol{\theta},\boldsymbol{\phi}),...,\mathbf{a}(\mathbf{r}_N,\boldsymbol{\theta},\boldsymbol{\phi})\right]^\mathsf{T}\in\mathbb{C}^{N\times M},\\
	&\mathbf{N}(t)=\left[\mathbf{n}(\mathbf{r}_1,t),\mathbf{n}(\mathbf{r}_2,t),...,\mathbf{n}(\mathbf{r}_N,t)\right]^\mathsf{T}\in\mathbb{C}^{N\times 3}.
\end{align}
Then, the received signal in CAPA can be formulated in a compact form:
\begin{align}\label{eq:20}
	\mathbf{X}(t)=\lim_{A_0\rightarrow 0}A_0 \left[\mathbf{A}(\boldsymbol{\theta},\boldsymbol{\phi})\mathbf{S}(t)\mathbf{V}+\mathbf{N}(t)\right].
\end{align}

Subsequently, the goal of this paper is to jointly estimate the orientation information $\left\{
\theta_m,\phi_m\right\}_{m=1}^{M}$ and attitude information $\left\{q_{m,x},q_{m,y},q_{m,z}\right\}_{m=1}^{M}$ from the CAPA receiver across $T$ snapshots $\left\{\mathbf{X}(t)\right\}_{t=1}^T$.
\begin{remark}
	\normalfont
	In this paper, the measurements of CAPA system are of infinite dimension ($N\rightarrow\infty$), rendering traditional DOA estimation methods such as MUSIC \cite{9784869} and semi-definite programming \cite{9695358} no longer applicable. Moreover, existing DOA estimation methods are based on single polarization models, limiting the potential estimation accuracy.
\end{remark}

\section{Joint DOA and Attitude Sensing Algorithm} \label{sec:algorithm}
In this section, we present the complete signal processing framework for tri-polarized CAPA. We begin with subspace decomposition using both self- and cross-covariance matrices to capture DOA information (\ref{III.A}), and then design an equivalent transformation–based algorithm to address the infinite-dimensional structure of CAPA (\ref{III.B}). We then provide theoretical analyses on the identifiability of the attitude vector (\ref{III.C}), followed by practical estimation methods for both general and special cases (\ref{III.D}). Finally, the computational complexity of the proposed scheme is discussed to assess its feasibility in real-time applications (\ref{III.E}).

\subsection{Subspace Decomposition for Tri-Polarized CAPA}\label{III.A}

Notably, the received signal $\mathbf{X}(t)$ consists of the signals along three polarization components: $\mathbf{X}(t)=[\mathbi{x}_x(t),\mathbi{x}_y(t),\mathbi{x}_z(t)]$,  with $\mathbi{x}_p(t)\in\mathbb{C}^{N}$ for $p\in\mathcal{P}$ and $\mathcal{P}=\{x,y,z\}$.  The statistical structure of $\mathbf{X}(t)$ can be fully characterized by its covariance matrices, where both the self-covariance $\mathbb{E}(\mathbi{x}_p\mathbi{x}_p^\mathsf{H})$ and cross-covariance $\mathbb{E}(\mathbi{x}_p\mathbi{x}_q^\mathsf{H})$ inherently encode the array manifold ($p,q\in\mathcal{P}, p\neq q$). Through subspace decomposition, these covariance matrices can be separated into signal and noise subspaces, from which the DOA information becomes identifiable. To facilitate the subsequent derivation, we analyze the problem in two cases: (i) self-covariance matrices and (ii) cross-covariance matrices.
\subsubsection{Self-covariance matrices} \label{subsubsec:1}

Firstly, we define the following covariance matrix of $\mathbi{x}_p(t)$:
\begin{align}
	\mathbf{R}^\mathbf{X}_{pp}&=\lim\limits_{A_0\rightarrow 0}\mathbb{E}(\mathbi{x}_p\mathbi{x}_p^\mathsf{H})\nonumber\\
	&=\lim\limits_{A_0\rightarrow 0}\mathbf{A}\mathbf{R}_{pp}^{\mathbf{S}}\mathbf{A}^\mathsf{H}+\sigma^2\mathbf{I}_N, p\in\mathcal{P},
\end{align}
where  $\mathbf{R}^{\mathbf{S}}_{pp}=\|\mathbi{v}_p\|_2^2\mathbb{E}(\mathbf{S}\mathbf{S}^\mathsf{H})$ denotes the covariance matrix of source signals modulated by  $\mathbi{v}_p$ and $\mathbi{v}_p=[\mathbf{V}]_{p,:}$ is the polarization vector. Notably, $\mathbf{R}^{\mathbf{S}}_{pp}$ is a positive definite matrix as long as $\|\mathbi{v}_p\|_2\neq 0$, 

On the other hand, $\mathbf{R}^{\mathbf{X}}_{pp}$ is a diagonal  matrix, which can be formulated by using eigenvalue decomposition:
\begin{align}
	\mathbf{R}^{\mathbf{X}}_{pp}=\lim\limits_{A_0\rightarrow 0}\mathbf{U}_{pp}\boldsymbol{\Lambda}_{pp}\mathbf{U}_{pp}^\mathsf{H},
\end{align}
where $\mathbf{U}_{pp}\in\mathbb{C}^{N\times N}$ is a unitary matrix consisting of eigenvectors of $\mathbf{R}^{\mathbf{X}}_{pp}$, which satisfies $\mathbf{U}_{pp}\mathbf{U}_{pp}^\mathsf{H}=\mathbf{I}_N$.  $\boldsymbol{\Lambda}_{pp}=\operatorname{diag}\{\lambda_1,\lambda_2,...,\lambda_N\}$ with $\lambda_n$ being the $n$-th eigenvalue and $n\in\mathcal{N}$. Hence, it can be derived that:
\begin{align}\label{eq:::23}
	\lim\limits_{A_0\rightarrow 0}\mathbf{U}_{pp}\boldsymbol{\Lambda}_{pp}\mathbf{U}_{pp}^\mathsf{H}=&\lim\limits_{A_0\rightarrow 0}\mathbf{A}\mathbf{R}^{\mathbf{S}}_{pp}\mathbf{A}^\mathsf{H}+\sigma^2\mathbf{I}_N\nonumber\\
	=&\lim\limits_{A_0\rightarrow 0}\mathbf{A}\mathbf{R}^{\mathbf{S}}_{pp}\mathbf{A}^\mathsf{H}+\sigma^2\mathbf{U}_{pp}\mathbf{U}_{pp}^\mathsf{H}.
\end{align}
Subsequently, one has $\mathbf{U}_{pp}(\boldsymbol{\Lambda}_{pp}-\sigma^2\mathbf{I}_N)\mathbf{U}_{pp}^\mathsf{H}=\mathbf{A}\mathbf{R}^{\mathbf{S}}_{pp}\mathbf{A}^\mathsf{H}$.  Apparently, the rank of $\mathbf{A}\mathbf{R}^{\mathbf{S}}_{pp}\mathbf{A}^\mathsf{H}$ is no bigger than $M$, rendering that $\boldsymbol{\Lambda}_{pp}-\sigma^2\mathbf{I}_N$ has a maximum amount of $M$ non-zero elements. To promote the following analyses, we define the sub-matrices as follows: $\mathbf{U}_{pp}=[\mathbf{U}_{pp,1},\mathbf{U}_{pp,2}]$ with $\mathbf{U}_{pp,1}\in \mathbb{C}^{N\times M}$ and $\mathbf{U}_{pp,2}\in \mathbb{C}^{N\times (N-M)}$ being the eigenvectors correponding to the signal and noise subspaces, respectively. Let $\boldsymbol{\Lambda}_{pp,1}=[\boldsymbol{\Lambda}_{pp}]_{1:M,1:M}$ and Eq. \eqref{eq:::23} can be rewritten into
\begin{align}
	\lim\limits_{A_0\rightarrow 0}\mathbf{U}_{pp,1}(\boldsymbol{\Lambda}_{pp,1}-\sigma^2\mathbf{I}_M)\mathbf{U}_{pp,1}^\mathsf{H}=\lim\limits_{A_0\rightarrow 0}\mathbf{A}\mathbf{R}^{\mathbf{S}}_{pp}\mathbf{A}^\mathsf{H}.
\end{align}
Based on the theoretical results in \cite{stoica2002music}, the following condition always holds given that $\mathbf{R}_{pp}^{\mathbf{S}}>\mathbf{0}$:
\begin{align}
	\operatorname{Span}(\mathbf{U}_{pp,1})=\operatorname{Span}(\mathbf{A}),
\end{align}
where $\operatorname{Span}(\star)$ denotes the space spanned by the linear vectors of matrix $\star$. Recalling that $\mathbf{U}_{pp}\mathbf{U}_{pp}^\mathsf{H}=\mathbf{I}_N$, one has $\mathbf{U}_{pp,1}\mathbf{U}_{pp,2}^\mathsf{H}=\mathbf{0}$ and the following equation holds:
\begin{align}\label{eq:r1}
	\lim\limits_{A_0\rightarrow 0}\mathbf{A}\mathbf{U}_{pp,2}^\mathsf{H}=\mathbf{0}.
\end{align}
Based on this result, the DOA parameters in $\mathbf{A}$ can be calculated. 

\subsubsection{Cross-covariance matrices} 
Firstly, we define the cross-covariance matrice between the $p$ and $q$ polarization directions:
\begin{align}
	\mathbf{R}^{\mathbf{X}}_{pq}&=\lim\limits_{A_0\rightarrow 0}\mathbb{E}(\mathbi{x}_p\mathbi{x}_q^\mathsf{H})\nonumber\\
	&=\lim\limits_{A_0\rightarrow 0}\mathbf{A}\mathbf{R}_{pq}^{\mathbf{S}}\mathbf{A}^\mathsf{H}, p\neq q, p,q\in\mathcal{P},
\end{align}
where  $\mathbf{R}^{\mathbf{S}}_{pq}=\mathbb{E}(\mathbf{S}\mathbi{v}_p\mathbi{v}_q^\mathsf{H}\mathbf{S}^\mathsf{H})$ denotes the variance matrix of $\mathbf{S}$ modulated by $\mathbi{v}_p\mathbi{v}_q^\mathsf{H}$, which is a positive definite matrix as long as $\|\mathbi{v}_p\|_2\neq 0$ and $\|\mathbi{v}_q\|_2\neq 0$. On the other hand, $\mathbf{R}^{\mathbf{X}}_{pp}$ can be formulated by using eigenvalue decomposition:
\begin{align}
	\mathbf{R}^{\mathbf{X}}_{pq}=\lim\limits_{A_0\rightarrow 0}\mathbf{U}_{pq}\boldsymbol{\Lambda}_{pq}\mathbf{V}_{pq}^\mathsf{H}=\lim\limits_{A_0\rightarrow 0}\mathbf{A}\mathbf{R}_{pq}^{\mathbf{S}}\mathbf{A}^\mathsf{H},
\end{align}
Similarly, it can be observed that $\boldsymbol{\Lambda}_{pq}$ has a maximum amount of $M$ non-zero elements, such that we define the sub-matrices: $\mathbf{U}_{pq}=[\mathbf{U}_{pq,1},\mathbf{U}_{pq,2}]$, $\mathbf{V}_{pq}=[\mathbf{V}_{pq,1},\mathbf{V}_{pq,2}]$ with $\mathbf{U}_{pq,1},\mathbf{V}_{pq,1}\in \mathbb{C}^{N\times M}$ and  $\boldsymbol{\Lambda}_{pq,1}=[\boldsymbol{\Lambda}_{pq}]_{1:M,1:M}$.  Then, it can be derived that:
\begin{align}
	\lim\limits_{A_0\rightarrow 0}\mathbf{U}_{pq,1}\boldsymbol{\Lambda}_{pq,1}\mathbf{V}_{pq,1}^\mathsf{H}=\lim\limits_{A_0\rightarrow 0}\mathbf{A}\mathbf{R}_{pq}^{\mathbf{S}}\mathbf{A}^\mathsf{H}.
\end{align}

Then, the following lemma is involved to promote the analyses.

\begin{lemma}\label{lemma:1}
	\normalfont Given that $\mathbf{R}^{\mathbf{S}}_{pq}$ is a positive definite matrix, the null space of $\mathbf{U}_{pq,1}$ belongs to that of $\mathbf{A}$, i.e., $ \operatorname{Null}(\mathbf{U}_{pq,1})\in\operatorname{Null}(\mathbf{A})$.
\end{lemma}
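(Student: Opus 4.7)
The plan is to pick an arbitrary vector $\mathbf{y}$ in the (left) null space of $\mathbf{U}_{pq,1}$, i.e.\ satisfying $\mathbf{U}_{pq,1}^{\mathsf{H}}\mathbf{y}=\mathbf{0}$, and show that it must also annihilate $\mathbf{A}^{\mathsf{H}}$. Interpreting ``null space'' in this left sense is the natural reading here, since $\mathbf{U}_{pq,1}\in\mathbb{C}^{N\times M}$ has orthonormal columns and therefore a trivial right null space; the convention also matches the noise-subspace orthogonality $\mathbf{A}\mathbf{U}_{pp,2}^{\mathsf{H}}=\mathbf{0}$ derived for the self-covariance case immediately above the lemma.

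Starting from the identity $\mathbf{U}_{pq,1}\boldsymbol{\Lambda}_{pq,1}\mathbf{V}_{pq,1}^{\mathsf{H}}=\mathbf{A}\mathbf{R}_{pq}^{\mathbf{S}}\mathbf{A}^{\mathsf{H}}$ displayed just before the lemma, I would pre-multiply both sides by $\mathbf{y}^{\mathsf{H}}$. The left-hand side vanishes because $\mathbf{y}^{\mathsf{H}}\mathbf{U}_{pq,1}=\mathbf{0}$, so the identity collapses to $\mathbf{y}^{\mathsf{H}}\mathbf{A}\mathbf{R}_{pq}^{\mathbf{S}}\mathbf{A}^{\mathsf{H}}=\mathbf{0}$. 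Abbreviating $\mathbf{z}^{\mathsf{H}}\triangleq\mathbf{y}^{\mathsf{H}}\mathbf{A}\in\mathbb{C}^{1\times M}$, and using that $\mathbf{A}$ has full column rank $M$ (the standard array-manifold assumption of linear independence of $M$ distinct-DOA steering vectors), I would post-multiply by $\mathbf{A}(\mathbf{A}^{\mathsf{H}}\mathbf{A})^{-1}$ to peel off the trailing $\mathbf{A}^{\mathsf{H}}$, leaving $\mathbf{z}^{\mathsf{H}}\mathbf{R}_{pq}^{\mathbf{S}}=\mathbf{0}$. Invoking the positive-definiteness hypothesis on $\mathbf{R}_{pq}^{\mathbf{S}}$ for invertibility, one last post-multiplication by $(\mathbf{R}_{pq}^{\mathbf{S}})^{-1}$ forces $\mathbf{z}=\mathbf{0}$, i.e.\ $\mathbf{y}^{\mathsf{H}}\mathbf{A}=\mathbf{0}$, which is precisely $\mathbf{y}\in\operatorname{Null}(\mathbf{A})$ in the same left-null-space sense.

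The manipulations are essentially routine linear algebra, so the obstacles are conceptual rather than computational. The chief one is bookkeeping around the continuous limit $A_{0}\!\to\!0$: each of the identities above lives inside that limit, so strictly speaking every rank and inverse manipulation should be carried out at finite discretization and then passed through the limit, which I would do by noting that all ranks involved are bounded by $M$ uniformly in $N$. A secondary subtlety is that $\mathbf{R}_{pq}^{\mathbf{S}}=\mathbb{E}(\mathbf{S}\mathbf{v}_{p}\mathbf{v}_{q}^{\mathsf{H}}\mathbf{S}^{\mathsf{H}})$ is not Hermitian for $p\neq q$ in general, so the ``positive definite'' hypothesis is best read as ``non-singular,'' since the argument only needs invertibility and never a quadratic-form positivity. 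Finally, I would remark that the same chain of equalities, read in both directions, actually yields $\operatorname{Span}(\mathbf{U}_{pq,1})=\operatorname{Span}(\mathbf{A})$ and hence equality of the null spaces; the lemma states only the inclusion because that is what is needed downstream.
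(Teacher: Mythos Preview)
Your approach is correct and shares the same opening move as the paper: pre-multiply the identity $\mathbf{U}_{pq,1}\boldsymbol{\Lambda}_{pq,1}\mathbf{V}_{pq,1}^{\mathsf{H}}=\mathbf{A}\mathbf{R}_{pq}^{\mathbf{S}}\mathbf{A}^{\mathsf{H}}$ by $\mathbf{y}^{\mathsf{H}}$ to obtain $\mathbf{y}^{\mathsf{H}}\mathbf{A}\mathbf{R}_{pq}^{\mathbf{S}}\mathbf{A}^{\mathsf{H}}=\mathbf{0}$. From there the paper takes a slightly different final step: instead of peeling off $\mathbf{A}^{\mathsf{H}}$ via the pseudoinverse $\mathbf{A}(\mathbf{A}^{\mathsf{H}}\mathbf{A})^{-1}$, it right-multiplies by $\mathbf{y}$ to form the scalar $(\mathbf{y}^{\mathsf{H}}\mathbf{A})\mathbf{R}_{pq}^{\mathbf{S}}(\mathbf{A}^{\mathsf{H}}\mathbf{y})=0$ and then invokes positive definiteness of $\mathbf{R}_{pq}^{\mathbf{S}}$ directly to force $\mathbf{A}^{\mathsf{H}}\mathbf{y}=\mathbf{0}$. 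Your route trades that quadratic-form argument for an explicit full-column-rank assumption on $\mathbf{A}$ together with mere invertibility of $\mathbf{R}_{pq}^{\mathbf{S}}$. This is arguably the more careful path, precisely because of the subtlety you yourself flag: $\mathbf{R}_{pq}^{\mathbf{S}}=\mathbb{E}(\mathbf{S}\mathbf{v}_p\mathbf{v}_q^{\mathsf{H}}\mathbf{S}^{\mathsf{H}})$ is not Hermitian for $p\neq q$, so a literal quadratic-form reading of ``positive definite'' is delicate, whereas your argument only consumes non-singularity. The paper's version is a line shorter and avoids stating the full-rank hypothesis on $\mathbf{A}$ explicitly, but yours is more robust to that definitional issue.
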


\begin{proof}
	Take any $\mathbf{y}\in \operatorname{Null}(\mathbf{U}_{pq,1})$, left-multiplying the identity
	\(
	\mathbf{U}_{pq,1}\boldsymbol{\Lambda}_{pq,1}\mathbf{V}_{pq,1}^{\mathrm H}
	=
	\mathbf{R}^{\mathbf{S}}_{pq}\mathbf{A}^{\mathsf H}
	\)
	by \(\mathbf{y}^{\mathsf H}\) yields
	\begin{align}
		\mathbf{y}^\mathsf{H}\mathbf{U}_{pq,1}\boldsymbol{\Lambda}_{pq,1}\mathbf{U}_{pq,1}^\mathsf{H}=\mathbf{y}^\mathsf{H}\mathbf{A}\mathbf{R}^{\mathbf{S}}_{pq}\mathbf{A}^\mathsf{H}=\mathbf{0}.
	\end{align}
	Hence, it can be derived that:
	\begin{align}\label{eq:lema1}
		\left(\mathbf{y}^\mathsf{T}\mathbf{A}\right)\mathbf{R}^{\mathbf{S}}_{pq}\left(\mathbf{y}^\mathsf{T}\mathbf{A}\right)^\mathsf{H}=\mathbf{0}.
	\end{align}
	Since $\mathbf{R}^{\mathbf{S}}_{pq}> \mathbf{0}$ is a positive definite matrix, Eq. \eqref{eq:lema1} holds if and only if $\mathbf{y}^\mathsf{T}\mathbf{A}=\mathbf{0}$, i.e., $\mathbf{y}\in\operatorname{Null}(\mathbf{A})$. This completes the proof of \textbf{Lemma} \ref{lemma:1}.
\end{proof}

Then, similar to Eq. 
\eqref{eq:r1}, the following equation holds:
\begin{align}\label{eq:r2}
	\lim\limits_{A_0\rightarrow 0}\mathbf{A}\mathbf{U}_{pq,2}^\mathsf{H}=\mathbf{0}.
\end{align}

\begin{remark}
	\normalfont Notably, $\mathbf{A}$ can be recovered from each covariance matrix $\mathbf{R}_{pq}^\mathbf{X}$ for $p,q\in\mathcal{P}$. Consequently, up to nine independent equations are available for DOA estimation, which distinguishes the studied tri-polarized CAPA from conventional arrays that capture only a single polarization. 
\end{remark}

\begin{remark}
	\normalfont
	The derivation of Eqs. \eqref{eq:r1} and \eqref{eq:r2} relies on the assumption that $|\mathbi{v}_p| \neq 0$ for some $p \in \mathcal{P}$; equivalently, the projection of $\mathbf{v}_m$ onto the plane $\mathbf{z}_m^\perp$ must have a nonzero component along at least one axis $p$. Therefore, as long as $\mathbf{v}_m \nparallel \mathbf{z}_m$, there exists at least one $p \in \mathcal{P}$ such that the DOA remains identifiable.
\end{remark}

In the next, we will delve into the derivation of $\mathbf{U}_{pq,2}$ in practical applications.

\subsection{DOA Estimation Algorithm Design}\label{III.B}
Typically, the covariance matrix between received signals along axes $p$ and $q$ is estimated by averging $\mathbi{x}_p(t)\mathbi{x}_q^\mathsf{H}(t)$ over $T$ snapshots, which is formulated as
\begin{align}
	\hat{\mathbf{R}}_{pq}^{\mathbf{X}}&=\lim\limits_{A_0\rightarrow 0}\frac{1}{T}\sum_{t=1}^{T}\mathbi{x}_p(t)\mathbi{x}_q^\mathsf{H}(t)\nonumber\\
	&=\lim\limits_{A_0\rightarrow 0}\frac{1}{T}\bar{\mathbf{X}}_p\bar{\mathbf{X}}_q^\mathsf{H}\in\mathbb{C}^{N\times N}, p,q\in\mathcal{P},
\end{align}
where $\bar{\mathbf{X}}_p=\left[\mathbi{x}_p(1),\mathbi{x}_p(2),...,\mathbi{x}_p(T)\right]\in \mathbb{R}^{N\times T}$. It is worth pointing out that $T\geq M$ is required to derive a rank $M$ approximation of ${\mathbf{R}}_{pq}^{\mathbf{X}}$ \cite{9384289}. Here, there exists a challenge that the dimension of  $\hat{\mathbf{R}}_{pq}^{\mathbf{X}}$ is theoretically infinite due to the infinitesimal units of CAPA, rendering the eigendecomposition of $\hat{\mathbf{R}}_{pq}^{\mathbf{X}}$  intractable.

To cope with this challenge, we design an equivalent transforming mechanism. According to the definition of eigenvalues and eigenvectors, the eigenvalue $\lambda$ and eigenvector $\mathbf{u}$ of the matrix $\hat{\mathbf{R}}_{pq}^{\mathbf{X}}$ satisfy $\hat{\mathbf{R}}_{pq}^{\mathbf{X}} \mathbf{u}_{pq}= \lambda \mathbf{u}_{pq}$. Multiplying both sides by $\bar{\mathbf{X}}^\mathsf{H}_q$ yields that:
\begin{align}
	\underbrace{\frac{1}{T}\lim\limits_{A_0\rightarrow 0}\bar{\mathbf{X}}_q^\mathsf{H}\bar{\mathbf{X}}_p}_{\mathbf{K}_{pq}}\bar{\mathbf{X}}_q^\mathsf{H}\mathbf{u}_{pq}= \lambda\underbrace{\lim\limits_{A_0\rightarrow 0}\bar{\mathbf{X}}_q^\mathsf{H}\mathbf{u}_{pq}}_{\mathbf{u}_p^\prime},
\end{align}
where $\mathbf{K}_{pq}=\frac{1}{T}\lim\limits_{A_0\rightarrow 0}\bar{\mathbf{X}}_q^\mathsf{H}\bar{\mathbf{X}}_p\in\mathbb{C}^{T\times N}$ and $\mathbf{u}_{pq}^\prime=\lim\limits_{A_0\rightarrow0}\bar{\mathbf{X}}_p^\mathsf{H}\mathbf{u}_{pq}\in\mathbb{C}^{T}$ are defined as the equivalent covariance matrix and the equivalent eigenvector, respectively. Then, noticing that matrices  $\mathbf{K}_{pq}$ and $\hat{\mathbf{R}}_{pq}^\mathbf{X}$ share a number of $M$ identical vectors, the eigendecomposition of the infinite matrix $\hat{\mathbf{R}}_{pq}^\mathbf{X}$ can be equivalently reduced to finding the eigenvectors of a finite matrix $\mathbf{K}_{pq}$ and recovering $\mathbf{u}_{pq}$ from $\mathbf{u}^\prime_{pq}$.

Subsequently, we will focus on the calculation of $\mathbf{K}_{pq}$ and $\mathbf{u}^\prime_{pq}$. The $(i,j)$-th entry of $\mathbf{K}_{pq}$ can be calculated as
\begin{align}\label{eq:::29}
	[\mathbf{K}_{pq}]_{i,j}=&\lim\limits_{A_0\rightarrow 0}\frac{1}{T}\sum_{n=1}^{N}[\mathbf{x}^{\ast}_n(i)]_q[\mathbf{x}_n(j)]_p\nonumber\\
	=&\frac{1}{T}\lim\limits_{A_0\rightarrow 0}A_0^2\sum_{n=1}^{N}e_q^\ast(\mathbf{r}_n,i)e_p(\mathbf{r}_n,j)\nonumber\\
	=&\frac{1}{T}\int_{\mathcal{S}}e_q^\ast(\mathbf{r},i)e_p(\mathbf{r},j)d\mathbf{r}, i,j\in\mathcal{T}.
\end{align}
The calculation of integral in Eq. \eqref{eq:::29} is incurs both high time and space complexity, potentially limiting the algorithm feasibility. In this paper, we involve the Gauss-Legendre quadrature to promote the calculation of integration. The integral of a function $f(x,y)$ over region $[a,b]$ can approximated by the following Gauss-Legendre quadrature \cite{swarztrauber2003computing}:
\begin{align}
	\int_{a}^{b}f(x)dx\approx\frac{b-a}{2} \sum^{K}_{k=1}\omega_kf(\frac{b-a}{2}\theta_k+\frac{a+b}{2}),
\end{align}
where $K\in\mathbb{Z}^+$ is the order of Gauss-Legendre quadrature, equaling to the number of sampling points, and typically a larger $K$ contributes to higher approximation accuracy but also increased computational burden. $\theta_k$ and $\omega_k$ denote the roots of Gauss-Legendre polynomial and the corresponding weight, respectively, which are formulated as follows:
\begin{align}\label{eq:31}
	\omega_k=\frac{2}{(1-\theta_k^2)[P_K^\prime(\theta_k)]^2}, k\in \mathcal{K},
\end{align}  
where $\mathcal{K}=\{1,2,...,K\}$, $P_K^\prime(\theta_k)$ is the first-order differentials of the $K$-th Gauss-Legendre polynomial at $\theta_k$. Based on this technique, Eq. \eqref{eq:::29} can be further reformulatd as
\begin{align}\label{eq:::32}
	&[\mathbf{K}_{pq}]_{i,j}\nonumber\\
	=&\frac{1}{T}\int_{-\frac{L_y}{2}}^{\frac{L_y}{2}}\int_{-\frac{L_x}{2}}^{\frac{L_x}{2}}e_q^\ast(\mathbf{r}_n,i)e_p(\mathbf{r}_n,j)dr_xdr_y\nonumber\\
	\approx& \frac{L_xL_y}{4T}\sum_{k_x=1}^{K}\sum_{k_y=1}^{K}\omega_{k_x}\omega_{k_y}e_q^\ast(\tilde{\mathbf{r}}_{k_x,k_y},i)e_p(\tilde{\mathbf{r}}_{k_x,k_y},j),
\end{align}
where $\tilde{\mathbf{r}}_{k_x,k_y}=\left[\theta_{k_x}L_x/2,\theta_{k_y}L_y/2,0\right]^\mathsf{T}$. Notably, the dimension of $\mathbf{u}$ is the same with the number of receive units in CAPA, such that each element can be treated as a variable dependant on the corresponding array unit: $\mathbf{u}=[u(\mathbf{r}_1),u(\mathbf{r}_2),...,u(\mathbf{r}_N)]^\mathsf{T}$. Then, similar to Eq. \eqref{eq:::32},  $\mathbf{u}_{pq}^\prime=\lim\limits_{A_0\rightarrow 0}\bar{\mathbf{X}}_q^\mathsf{H}\mathbf{u}$ can be rewritten into
\begin{align}\label{eq:ui}
	\mathbf{u}_{pq}^\prime=&\int_{-\frac{L_y}{2}}^{\frac{L_y}{2}}\int_{-\frac{L_x}{2}}^{\frac{L_x}{2}}\bar{\mathbf{e}}_{q}(\mathbf{r})u(\mathbf{r})dr_xdr_y \nonumber\\
	\approx & \frac{L_xL_y}{4}\sum_{k_x=1}^{K}\sum_{k_y=1}^{K}\omega_{k_x}\omega_{k_y}\bar{\mathbf{e}}_q(\tilde{\mathbf{r}}_{k_x,k_y})u(\tilde{\mathbf{r}}_{k_x,k_y})\nonumber\\
	=& \bar{\mathbf{E}}_q\boldsymbol{\Omega}\bar{\mathbf{u}}_{pq},
\end{align}
where $\bar{\mathbf{e}}_q(\mathbf{r})=[e_q(\mathbf{r},1),e_q(\mathbf{r},2),...,e_q(\mathbf{r},T)]^\mathsf{T}\in\mathbb{C}^{T}$, and 
\begin{align}
	&\bar{\mathbf{E}}_q=[\bar{\mathbf{e}}_q(\tilde{\mathbf{r}}_{1,1}),\bar{\mathbf{e}}_q(\tilde{\mathbf{r}}_{1,2}),...,\bar{\mathbf{e}}_q(\tilde{\mathbf{r}}_{K,K})]\in\mathbb{C}^{T\times K^2},\label{eq:34}\\
	&\boldsymbol{\Omega} = \operatorname{diag}\{\omega_1\omega_1,\omega_1\omega_2,...,\omega_K\omega_K\}\in\mathbb{R}^{K^2\times K^2},\label{eq:35}\\
	&\bar{\mathbf{u}}_{pq}=[u(\tilde{\mathbf{r}}_{1,1}),u(\tilde{\mathbf{r}}_{1,2}),...,u(\tilde{\mathbf{r}}_{K,K})]^\mathsf{H}\in\mathbb{C}^{K^2}.
\end{align}

Notably, there are a total of $T$ eigenvectors for $\mathbf{K}_{pq}$, denoted by $\mathbf{u}_{pq,1}^\prime,\mathbf{u}_{pq,2}^\prime,...,\mathbf{u}_{pq,T}^\prime$.
Subsequenlty, the eigenvectors of the noise subspace can be recovered from $\mathbf{u}^\prime_{pq,i}$ as follows:
\begin{align}\label{eq:37}
	\bar{\mathbf{U}}_{pq,2}=&(\bar{\mathbf{E}}\boldsymbol{\Omega})^\dag[\mathbf{u}^\prime_{pq,M+1},\mathbf{u}^\prime_{pq,M+2},...,\mathbf{u}^\prime_{pq,T}]\nonumber\\
	&\in\mathbb{C}^{K^2\times (T-M)}.
\end{align}

Finally, the DOA parameters $\{\hat{\theta}_m,\hat{\phi}_m\}_{m=1}^M$ can be obtained via searching for the $M$ peaks in the following MUSIC spectrum for tri-polarized CAPA:
\begin{align}\label{eq:39}
	&\{\hat{\theta}_m,\hat{\phi}_m\}_{m=1}^M\nonumber\\
	=&{}^M\arg\max_{\theta,\phi}\prod_{p,q\in\mathcal{P}}\left(1+\lim\limits_{A_0\rightarrow 0}\frac{1}{\| \boldsymbol{\alpha}^\mathsf{H}(\theta,\phi)\mathbf{U}_{pq,2}\|_2}\right)\nonumber\\
	\approx& {}^M\arg\max_{\theta,\phi}\prod_{p,q\in\mathcal{P}}\left(1+\frac{1}{\| \bar{\boldsymbol{\alpha}}^\mathsf{H}(\theta,\phi)\bar{\mathbf{U}}_{pq,2}\|_2}\right),
\end{align}
where ${}^M\arg\max_{\theta,\phi}f(\cdot)$ is defined as the set of $M$ pairs of  $\{\theta,\phi\}$ at which $f(\cdot)$ reaches its local maxima, and 
\begin{align}
	&\boldsymbol{\alpha}(\theta,\phi)=[a(\mathbf{r}_1,{\theta},{\phi}),a(\mathbf{r}_2,{\theta},{\phi}),...,a(\mathbf{r}_N,{\theta},{\phi})]^\mathsf{H},\\
	&\bar{\boldsymbol{\alpha}}(\theta,\phi)=[a(\tilde{\mathbf{r}}_{1,1},{\theta},{\phi}),a(\tilde{\mathbf{r}}_{1,2},{\theta},{\phi}),...,a(\tilde{\mathbf{r}}_{K,K},{\theta},{\phi})]^\mathsf{H}.
\end{align}
The addition of 1 in the spectrum function serves to avoid cases where certain polarization directions render the spectrum value zero, which would otherwise cause the product output to be zero and prevent the detection of actual DOA peaks.

\begin{remark}
	\normalfont Conventional antennas are limited to receiving signals along a single polarization axis, which may result in peak loss in the spectrum when the target’s dipole orientation aligns unfavorably. In contrast, the proposed tri-polarized receiving antenna exploits both the self- and cross-covariance matrices and developing a novel tri-polarized spectrum, realizing reliable DOA estimation as long as $\mathbf{v}_m \nparallel \mathbf{z}_m$. This enables more robust DOA acquisition under diverse target orientations.
\end{remark}

The main procedures of the aforementioned algorithm are summarized in Algorithm \ref{alg:CAPA-MUSIC}.

\begin{algorithm}[t]
	\caption{CAPA-MUSIC Algorithm}
	\begin{algorithmic}[1]\label{alg:CAPA-MUSIC}
		\REQUIRE CAPA aperture parameters (area $L_x\times L_y$ and region $\mathcal{S}$;
		Number of snapshots $T$ and the received signal $\mathbf{X}(t)$ for $t\in\mathcal{T}$;
		Legendre polynomial dimension $K$.
		
		\ENSURE Estimated target azimuth and elevation angles $\{\hat{\theta}_m,\hat{\phi}_m\}_{m=1}^M$.
		\STATE Compute the roots $\theta_k$ and weights $\omega_k$ for $K$-th Legendre polynomial using Eq. \eqref{eq:31}.
		Compute $\boldsymbol{\Omega}$ using Eq. (\ref{eq:34})
		\FOR{$p\in\{x,y,z\}$}
		\STATE Compute $\bar{\mathbf{E}}_p$ using Eq. \eqref{eq:35}.
		\FOR{$q\in\{x,y,z\}$}
		\STATE Compute matrix $\mathbf{K}_{pq}$ by using Eq. (\ref{eq:::32}).
		\STATE Compute eigenvectors $\mathbf{u}^\prime_{pq,1},\mathbf{u}^\prime_{pq,2},...,\mathbf{u}^\prime_{pq,T}$ using eigendecomposition.
		\STATE Compute $\bar{\mathbf{U}}_{pq,2}$  using Eq.   (\ref{eq:37}).
		\ENDFOR
		\ENDFOR		
		\STATE Compute the MUSIC spectrum using Eq. (\ref{eq:39}).
		\STATE Search on the MUSIC spectrum to find the local maximas $\{\hat{\theta}_m,\hat{\phi}_m\}_{m=1}^M$.
		\RETURN Estimated target azimuth and elevation angles $\{\hat{\theta}_m,\hat{\phi}_m\}_{m=1}^M$.
	\end{algorithmic}
\end{algorithm}

\subsection{Theoretical Analyses for Attitude Estimation}\label{III.C}
To better understand the identifiability of the attitude vector, we revisit the received data model in Eq.~\eqref{eq:::42}:
\begin{align}\label{eq:::42}
	\mathbf{X}(t)=\lim_{A_0\rightarrow 0}A_0 \left[\mathbf{A}(\boldsymbol{\theta},\boldsymbol{\phi})\mathbf{S}(t)\mathbf{V}+\mathbf{N}(t)\right],
\end{align}
where $\mathbf{A}(\boldsymbol{\theta},\boldsymbol{\phi})$ has been estimated in Sec. \ref{III.A}. We aim to estimate $\mathbf{V}$ from Eq. \eqref{eq:::42} in the rest of this section. Notably, there are two unknown variables in Eq. \eqref{eq:::42}: $\mathbf{S}(t)$ and $\mathbf{V}$, which formulate a typical bilinear map satisfying that:
\begin{align}
	&\forall \boldsymbol{\Psi}=\operatorname{diag}\{\psi_1,\psi_2,...,\psi_M\}>0, \nonumber\\
	&\mathbf{S}(t)\mathbf{V}=\left(\mathbf{S}(t)\boldsymbol{\Psi}\right)\left(\boldsymbol{\Psi}^{-1}\mathbf{V}\right),
\end{align}
indicating that Eq. \eqref{eq:::42} has infinite possible solutions, which is the inherent scale ambiguity in bilinear map. Hence, by resorting to the scale ambiguity theory \cite{choudhary2014identifiability}, it is intractable to estimate $\mathbf{V}$ individually.

To circumvent this ambiguity, we introduce a composite representation:
\begin{align}
	\boldsymbol{\Gamma}(t)=\mathbf{S}(t)\mathbf{V}.
\end{align}
This allows us to treat the recovery of $\boldsymbol{\Gamma}(t)$ as a standard linear estimation problem from Eq.~\eqref{eq:::42}, and subsequently focus on extracting structural information about $\mathbf{V}$ from $\boldsymbol{\Gamma}(t)$. Since the estimation of  $\boldsymbol{\Gamma}(t)$ from Eq. \eqref{eq:::42} reduces to a standard linear estimation problem, we mainly focus on extracting $\mathbf{V}$ from $\boldsymbol{\Gamma}(t)$. Specifically, $\boldsymbol{\Gamma}(t)$ can be formulated as follows:
\begin{align}
	\boldsymbol{\Gamma}(t)=\begin{bmatrix}
		s_1(t) & 0 & \cdots & 0 \\
		0 & s_2(t) & \cdots & 0 \\
		\vdots & \vdots & \ddots & \vdots \\
		0 & 0 & \cdots & s_M(t)
	\end{bmatrix}
	\begin{bmatrix}
		\mathbf{v}_1^\mathsf{T}\\
		\mathbf{v}_2^\mathsf{T}\\
		\vdots\\
		\mathbf{v}_M^\mathsf{T}\\
	\end{bmatrix},
\end{align}
where each row is $\mathbf{v}_m=(\mathbf{I}-\bar{\mathbf{z}}_m\bar{\mathbf{z}}_m^\mathsf{T})\bar{\mathbf{q}}_m$.

Notably, the attitude vector $\bar{\mathbf{q}}_m$ can be decomposed with respect to $\bar{\mathbf{z}}_m$ as:
\begin{align}
	\bar{\mathbf{q}}_m=\bar{\mathbf{q}}_m^{\perp \mathbf{z}}+\bar{\mathbf{q}}_m^{\|\mathbf{z}}, m\in\mathcal{M},
\end{align}
where
\begin{align}
	&\bar{\mathbf{q}}_m^{\perp \mathbf{z}}=\bar{\mathbf{q}}_m-\bar{\mathbf{z}}_m^\mathsf{T}\bar{\mathbf{q}}_m\bar{\mathbf{z}}_m,\\
	&\bar{\mathbf{q}}_m^{\parallel\mathbf{z}}=\bar{\mathbf{z}}_m^\mathsf{T}\bar{\mathbf{q}}_m\bar{\mathbf{z}}_m,
\end{align}
represent the components perpendicular and parallel to vector $\bar{\mathbf{z}}_m$, respectively. The projection operator $\mathbf{I}-\bar{\mathbf{z}}_m\bar{\mathbf{z}}_m^\mathsf{T}$ inherently nulls the components of $\bar{\mathbf{q}}_m$ parallel to 
$\mathbf{z}_m$, thereby retaining only the orthogonal part, which can be formulated as follows:
\begin{align}
	&(\mathbf{I}-\bar{\mathbf{z}}_m\bar{\mathbf{z}}_m^\mathsf{T})\bar{\mathbf{q}}_m\nonumber\\
	=&(\mathbf{I}-\bar{\mathbf{z}}_m\bar{\mathbf{z}}_m^\mathsf{T})\bar{\mathbf{q}}_m^{\perp \mathbf{z}}+\underbrace{(\mathbf{I}-\bar{\mathbf{z}}_m\bar{\mathbf{z}}_m^\mathsf{T})\bar{\mathbf{q}}_m^{\parallel\mathbf{z}}}_{=0}.
\end{align}

\begin{figure}
	\centering
	\includegraphics[width=0.5\linewidth]{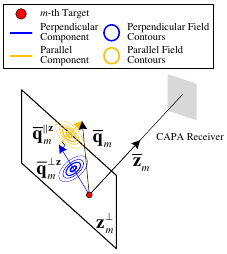}
	\caption{Illustration of the identifiability of attitude vectors}
	\label{fig:altitudeillustration}
\end{figure}

Therefore, it can be concluded that the  parallel components of $\bar{\mathbf{q}}_m$ is neglected in $\boldsymbol{\Gamma}(t)$ due to the existence of orthogonal projection matrix $\mathbf{I}-\bar{\mathbf{z}}_m\bar{\mathbf{z}}^\mathsf{T}_m$, such that only the perpendicular components $\bar{\mathbf{q}}_m^{\perp \mathbf{z}}$ can be estimated from $\boldsymbol{\Gamma}(t)$ when the signal snapshot $\mathbf{S}(t)$ are unknown. Illustratively, we depict the components of the attitude vector in Fig.  \ref{fig:altitudeillustration}. Due to the transverse nature of EM wave, it can be observed that only the perpendicular components of source signals can be propagated to the CAPA receiver. Hence, we give the following proposition to summarize the observations:
\begin{proposition}[Identifiability of the attitude vector]
	\normalfont For the utilized tri-polarization signal model in Eq. \eqref{eq:20}, given that the snapshot signal $s_m(t)$ is unknown, the full attitude vector $\bar{\mathbf{q}}_m$ is not uniquely identifiable when $m\in\mathcal{M}$ is unknown.  However, the direction of the perpendicular components $\bar{\mathbf{q}}_m^{\perp \mathbf{z}}$ remains identifiable.
\end{proposition}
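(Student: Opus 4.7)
The plan is to split the proposition into a negative part (non-identifiability of the full $\bar{\mathbf{q}}_m$) and a positive part (identifiability of the direction of $\bar{\mathbf{q}}_m^{\perp \mathbf{z}}$), both following from the bilinear structure noted after Eq.~\eqref{eq:::42} combined with the projection identity $\mathbf{v}_m=(\mathbf{I}-\bar{\mathbf{z}}_m\bar{\mathbf{z}}_m^\mathsf{T})\bar{\mathbf{q}}_m$. Since $\mathbf{A}(\boldsymbol{\theta},\boldsymbol{\phi})$ is already known from Section~\ref{III.A} and has full column rank (the manifold vectors at distinct $(\theta_m,\phi_m)$ being linearly independent in the large-aperture limit), the map $\mathbf{X}(t)\mapsto \boldsymbol{\Gamma}(t)=\mathbf{S}(t)\mathbf{V}$ is itself well defined up to noise. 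Hence identifiability of $\bar{\mathbf{q}}_m$ reduces to whether the factors $\mathbf{S}(t)$ and $\mathbf{V}$ can be recovered from $\{\boldsymbol{\Gamma}(t)\}_{t=1}^{T}$ alone.

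For the negative part, I would exhibit an explicit ambiguity. Pick any nonzero diagonal matrix $\boldsymbol{\Psi}=\operatorname{diag}\{\psi_1,\ldots,\psi_M\}$ and set $\mathbf{S}'(t)=\mathbf{S}(t)\boldsymbol{\Psi}$, $\mathbf{V}'=\boldsymbol{\Psi}^{-1}\mathbf{V}$. Then $\mathbf{S}'(t)\mathbf{V}'=\mathbf{S}(t)\mathbf{V}=\boldsymbol{\Gamma}(t)$ for every $t$, so the two parameter sets generate identical received signals. Because $\mathbf{v}_m=(\mathbf{I}-\bar{\mathbf{z}}_m\bar{\mathbf{z}}_m^\mathsf{T})\bar{\mathbf{q}}_m$ and the rescaled row $\psi_m^{-1}\mathbf{v}_m$ can arise from a rescaled attitude $\psi_m^{-1}\bar{\mathbf{q}}_m^{\perp\mathbf{z}}$ plus any parallel component $\beta_m\bar{\mathbf{z}}_m$, the preimage of a single $\mathbf{v}_m$ under the projection is an entire affine line $\{\bar{\mathbf{q}}_m^{\perp\mathbf{z}}+\beta_m\bar{\mathbf{z}}_m:\beta_m\in\mathbb{R}\}$. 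Worse, the unit-norm constraint $\|\bar{\mathbf{q}}_m\|=1$ does not recover uniqueness because the scale ambiguity from $\boldsymbol{\Psi}$ still leaves the perpendicular magnitude $\|\bar{\mathbf{q}}_m^{\perp\mathbf{z}}\|$ undetermined, which in turn leaves $\beta_m$ free. This yields infinitely many distinct $\bar{\mathbf{q}}_m$ producing identical observations, proving non-identifiability.

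For the positive part, I would argue that despite the diagonal scale ambiguity in $\boldsymbol{\Psi}$, the \emph{direction} of each row of $\mathbf{V}$ is invariant, since $\mathbf{V}$ and $\boldsymbol{\Psi}^{-1}\mathbf{V}$ differ only by a positive scalar on each row. Thus the unit vector $\mathbf{v}_m/\|\mathbf{v}_m\|_2$ is uniquely recoverable from $\boldsymbol{\Gamma}(t)$. Finally, I would verify that $\mathbf{v}_m$ and $\bar{\mathbf{q}}_m^{\perp\mathbf{z}}$ are parallel by a direct calculation: applying the projection $(\mathbf{I}-\bar{\mathbf{z}}_m\bar{\mathbf{z}}_m^\mathsf{T})$ to the decomposition $\bar{\mathbf{q}}_m=\bar{\mathbf{q}}_m^{\perp\mathbf{z}}+\bar{\mathbf{q}}_m^{\parallel\mathbf{z}}$ annihilates the second term and fixes the first, giving $\mathbf{v}_m=\bar{\mathbf{q}}_m^{\perp\mathbf{z}}$ exactly. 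Hence $\mathbf{v}_m/\|\mathbf{v}_m\|_2=\bar{\mathbf{q}}_m^{\perp\mathbf{z}}/\|\bar{\mathbf{q}}_m^{\perp\mathbf{z}}\|_2$, establishing identifiability of the perpendicular direction.

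The main obstacle I anticipate is rigorously ruling out that some auxiliary structural constraint (such as $\|\bar{\mathbf{q}}_m\|=1$, statistical assumptions on $s_m(t)$, or cross-snapshot coupling through $\mathbf{S}(t)=\operatorname{diag}\{s_1(t),\ldots,s_M(t)\}$ being diagonal for all $t$) could secretly eliminate the ambiguity. The diagonal structure of $\mathbf{S}(t)$ is in fact the only nontrivial constraint, and it is preserved exactly by $\mathbf{S}(t)\boldsymbol{\Psi}$ since $\boldsymbol{\Psi}$ is diagonal; and the norm constraint only ties $\beta_m$ to $\psi_m$ but does not pin either. I would make this explicit by exhibiting a one-parameter family of valid $(\mathbf{S}'(t),\mathbf{V}')$ with unit-norm $\bar{\mathbf{q}}_m'$, thereby closing the non-identifiability argument cleanly.
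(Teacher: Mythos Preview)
Your proposal is correct and mirrors the paper's own reasoning almost exactly: the paper also invokes the diagonal scale ambiguity $\mathbf{S}(t)\mathbf{V}=(\mathbf{S}(t)\boldsymbol{\Psi})(\boldsymbol{\Psi}^{-1}\mathbf{V})$ for the negative part and then shows via the decomposition $\bar{\mathbf{q}}_m=\bar{\mathbf{q}}_m^{\perp\mathbf{z}}+\bar{\mathbf{q}}_m^{\parallel\mathbf{z}}$ that the projection $(\mathbf{I}-\bar{\mathbf{z}}_m\bar{\mathbf{z}}_m^\mathsf{T})$ annihilates the parallel component, leaving only the perpendicular direction identifiable. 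Your treatment is in fact slightly more careful than the paper's informal discussion, since you explicitly verify that neither the diagonal structure of $\mathbf{S}(t)$ nor the unit-norm constraint on $\bar{\mathbf{q}}_m$ resolves the ambiguity.
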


Based on this observation, we proceed the attitude estimation algorithm design in two situations, i.e., the general case with unknown $s_m(t)$ and the special case with known $s_m(t)$.

\subsection{Attitude Estimation Algorithm Design}\label{III.D}
In this section, we aim to estimate the attitude parameters $\{\bar{\mathbf{q}}_m\}_{m=1}^{M}$ given the received signal $\mathbf{X}$ and the estimated DOA parameters $\{\hat{\theta}_m,\hat{\phi}_m\}_{m=1}^M$
\subsubsection{General Cases without Target Information}
In the case, $s_m(t)$ is unknown and only the perpendicular components $\bar{\mathbf{q}}_m^{\perp \mathbf{z}}$ can be estimated. At time $t$,  the received signal $\mathbf{X}(t)$ is formulated in Eq. \eqref{eq:20}. By defining $\boldsymbol{\Gamma}(t)=\mathbf{S}(t)\mathbf{V}$, Eq. \eqref{eq:20} can be rewritten into the following form:
\begin{align}\label{eq:42}
	\mathbf{X}(t)=\lim\limits_{A_0\rightarrow 0}A_0[\mathbf{A}(\boldsymbol{\theta},\boldsymbol{\phi}) \boldsymbol{\Gamma}(t)+\mathbf{N}(t)].
\end{align}
Apparently, $\boldsymbol{\Gamma}(t)$ can be estimated from Eq. \eqref{eq:42} by using the following least square model:
\begin{align}
	\hat{\boldsymbol{\Gamma}}(t)=\arg\min_{\boldsymbol{\Gamma}(t)}\lim\limits_{A_0\rightarrow 0}\left\|\mathbf{X}(t)-A_0\mathbf{A}(\boldsymbol{\theta},\boldsymbol{\phi})\boldsymbol{\Gamma}(t)\right\|_2^2.
\end{align}
By setting the derivative with regard to $\boldsymbol{\Gamma}(t)$ to zero, one has
\begin{align}\label{eq:44}
	\lim\limits_{A_0\rightarrow 0}A_0\mathbf{A}^\mathsf{H}(\boldsymbol{\theta},\boldsymbol{\phi})\mathbf{X}(t)=\lim\limits_{A_0\rightarrow 0}A_0^2\mathbf{A}^\mathsf{H}(\boldsymbol{\theta},\boldsymbol{\phi})\mathbf{A}(\boldsymbol{\theta},\boldsymbol{\phi})\boldsymbol{\Gamma}(t).
\end{align} 
Similar to the derivation of $\mathbf{K}_{pq}$, the terms on the left side of Eq. \eqref{eq:44} can be reformulated as follows:
\begin{align}
	&\lim\limits_{A_0\rightarrow 0}A_0\mathbf{A}^\mathsf{H}(\boldsymbol{\theta},\boldsymbol{\phi})\mathbf{X}(t)\nonumber\\
	=&\int_{\mathcal{S}}\mathbf{a}^\mathsf{H}(\mathbf{r},\boldsymbol{\theta},\boldsymbol{\phi})\mathbf{e}(\mathbf{r},t)d\mathbf{r}\triangleq\mathbf{Q}(\boldsymbol{\theta},\boldsymbol{\phi},t)\in\mathbb{C}^{M\times 3}.
\end{align}
For $i\in\mathcal{M}$ and $j\in\{1,2,3\}$, the $(i,j)$-th entry of $\mathbf{Q}(\boldsymbol{\theta},\boldsymbol{\phi},t)$ is given by:
\begin{align}\label{eq:46}
	&\left[\mathbf{Q}(\boldsymbol{\theta},\boldsymbol{\phi},t)\right]_{i,j}\nonumber\\
	=&\int_{-\frac{L_y}{2}}^{\frac{L_y}{2}}\int_{-\frac{L_x}{2}}^{\frac{L_x}{2}} a^\ast(\mathbf{r},{\theta}_i,{\phi}_i)[\mathbf{e}(\mathbf{r},t)]_{j} dr_xdr_y\nonumber\\
	=&\int_{-\frac{L_y}{2}}^{\frac{L_y}{2}}\int_{-\frac{L_x}{2}}^{\frac{L_x}{2}}e^{\mathsf{j}k\mathbf{r}^\mathsf{T}\mathbf{d}(\theta_i,\phi_i)}[\mathbf{e}(\mathbf{r},t)]_{j} dr_xdr_y.
\end{align}
In practical implementations, the evaluation of Eq.~\eqref{eq:46} via direct numerical integration can be computationally prohibitive, since each candidate direction $(\theta_i,\phi_i)$ requires a separate two-dimensional integration over the aperture. By contrast, the Fourier transform provides a natural and efficient framework: the integral in Eq.~\eqref{eq:46} corresponds to sampling the two-dimensional spatial Fourier transform of the received field at given frequency points. Specifically, we have the well-known 2-D Fourier transform:
\begin{align}
	\mathcal{F}_{\operatorname{2D}}\{f(x, y)\}|(f_x, f_y)
	= \iint_{-\infty}^{\infty} f(x, y) \, e^{-\mathsf{j}2\pi (f_x x + f_y y)} dxdy.
\end{align}
Hence, we find out that Eq. \eqref{eq:46} is a 2D windowed Fourier transform of received signal $ [\mathbf{e}(\mathbf{r},t)]_j$, which can be simplified as follows:
\begin{align}\label{eq:48}
	&\left[\mathbf{Q}(\boldsymbol{\theta},\boldsymbol{\phi},t)\right]_{i,j}\nonumber\\
	= &\iint_{-\infty}^{\infty} [\mathbf{e}(\mathbf{r},t)]_j  \Pi\left( \frac{r_x}{L_x}, \frac{r_y}{L_y} \right) 
	e^{\mathsf{j}k(d_{x,i} r_x + d_{y,i} r_y)} \, dr_x dr_y \nonumber\\
	= &\left.\mathcal{F}_{\operatorname{2D}} \left\{ [\mathbf{e}(\mathbf{r},t)]_j  \Pi\left( \frac{r_x}{L_x}, \frac{r_y}{L_y} \right) \right\}\right|{\left( -\frac{k}{2\pi} d_{x,i}, -\frac{k}{2\pi} d_{y,i} \right)}
	,
\end{align}
where $d_{x,i}=\cos\theta_i\cos\phi_i$, $d_{y,i}=\sin\theta_i\cos\phi_i$, and $\Pi(\cdot,\cdot)$ denotes the 2D rectangular window. Leveraging the fast fourier transform (FFT), the entire spectrum can be computed with remarkably reduced complexity. This enables efficient evaluation of the response over multiple targets, which is particularly advantageous for real-time and high-resolution attitude estimation.

Subsequently, we focus on the term in the right side of Eq. \eqref{eq:44}. Defining that $\boldsymbol{\Xi}\triangleq\lim\limits_{A_0\rightarrow 0}A_0^2\mathbf{A}^\mathsf{H}(\boldsymbol{\theta},\boldsymbol{\phi})\mathbf{A}(\boldsymbol{\theta},\boldsymbol{\phi})\in\mathbb{C}^{M\times M}$ and the $(i,j)$-th entry of $\boldsymbol{\Xi}$ can be formulated as follows:
\begin{align}\label{eq:::54xi}
	[\boldsymbol{\Xi}]_{i,j}=&\lim\limits_{A_0\rightarrow 0}A_0^2\boldsymbol{\alpha}^\mathsf{H}(\theta_i,\phi_i)\boldsymbol{\alpha}(\theta_j,\phi_j)\nonumber\\
	=&\int_{\mathcal{S}}a^\ast(\mathbf{r},\theta_i,\phi_i)a(\mathbf{r},\theta_j,\phi_j)d\mathbf{r}\nonumber\\
	=&\int_{-\frac{L_y}{2}}^{\frac{L_y}{2}}\int_{-\frac{L_x}{2}}^{\frac{L_x}{2}}  e^{\mathsf{j}k\mathbf{r}^\mathsf{T}[\mathbf{d}(\theta_i,\phi_i)-\mathbf{d}(\theta_j,\phi_j)]} dr_x dr_y\nonumber\\
	=&L_x L_y
	\operatorname{sinc}\left( \frac{k\,\Delta d_{i,j}^{(x)}\,L_x}{2} \right)
	\operatorname{sinc}\left( \frac{k\,\Delta d_{i,j}^{(y)}\,L_y}{2} \right), 
\end{align}
where
\begin{align}
	&\Delta d_{i,j}^{(x)} = \cos\theta_i \cos\phi_i - \cos\theta_j \cos\phi_j,i,j\in \mathcal{M}.\nonumber\\
	&\Delta d_{i,j}^{(y)} = \sin\theta_i \cos\phi_i - \sin\theta_j \cos\phi_j, i,j\in \mathcal{M}.\nonumber
\end{align}
Then, the right side of Eq. \eqref{eq:44} equals to $\boldsymbol{\Xi}\boldsymbol{\Gamma}(t)$
and $\boldsymbol{\Gamma}(t)$ can be estimated by
\begin{equation}\label{eq:computga}
	\hat{\boldsymbol{\Gamma}}(t)=\boldsymbol{\Xi}^{-1}\mathbf{Q}(\boldsymbol{\theta},\boldsymbol{\phi},t),
\end{equation}
with the elements of  $\mathbf{Q}(\boldsymbol{\theta},\boldsymbol{\phi},t)$ and $\boldsymbol{\Xi}$ elaborated in Eqs. \eqref{eq:48} and \eqref{eq:::54xi}, respectively.

Then, we recollect the definition of $\boldsymbol{\Gamma}(t)=\mathbf{S}(t)\mathbf{V}$. By defining the $m$-th row of $\boldsymbol{\Gamma}(t)$ as $\boldsymbol{\gamma}_m(t)$, one has:
\begin{align}\label{eq::56}
	\boldsymbol{\gamma}_m(t)=&s_m(t)(\mathbf{I}-\bar{\mathbf{z}}_m\bar{\mathbf{z}}_m^\mathsf{T})\bar{\mathbf{q}}_m\nonumber\\
	=&s_m(t)(\mathbf{I}-\bar{\mathbf{z}}_m\bar{\mathbf{z}}_m^\mathsf{T})(\bar{\mathbf{q}}_m^{\perp \mathbf{z}}+\bar{\mathbf{q}}_m^{\|\mathbf{z}}).
\end{align}
Due to the fact that $\mathbf{I}-\bar{\mathbf{z}}_m\bar{\mathbf{z}}_m^\mathsf{T}$ is an orthogonal projection matrix and the fact that $(\mathbf{I}-\bar{\mathbf{z}}_m\bar{\mathbf{z}}_m^\mathsf{T})\bar{\mathbf{q}}_m^{\|\mathbf{z}}=\mathbf{0}$, multiplying both sides of Eq. \eqref{eq::56} by $\mathbf{I}-\bar{\mathbf{z}}_m\bar{\mathbf{z}}_m^\mathsf{T}$ yields that:
\begin{align}\label{eq::57}
	&(\mathbf{I}-\bar{\mathbf{z}}_m\bar{\mathbf{z}}_m^\mathsf{T})\boldsymbol{\gamma}_m(t)\nonumber\\
	=&s_m(t)(\mathbf{I}-\bar{\mathbf{z}}_m\bar{\mathbf{z}}_m^\mathsf{T})\bar{\mathbf{q}}_m^{\perp \mathbf{z}}
	\stackrel{(a)}{=}s_m(t)\bar{\mathbf{q}}_m^{\perp \mathbf{z}},
\end{align}
where $(a)$ is due to the fact that $\bar{\mathbf{z}}_m^\mathsf{T}\bar{\mathbf{q}}_m^{\perp \mathbf{z}}=0$ for $\forall m\in\mathcal{M}$. Notably, $\boldsymbol{\gamma}_m$ is real-valued, while $s_m(t)$ is complex-valued. Therefore, we can rewrite Eq. \eqref{eq::57} into:
\begin{align}
	(\mathbf{I}-\bar{\mathbf{z}}_m\bar{\mathbf{z}}_m^\mathsf{T})\Re\{\boldsymbol{\gamma}_m(t)\}=\Re\{s_m(t)\}\bar{\mathbf{q}}_m^{\perp \mathbf{z}},\\
	(\mathbf{I}-\bar{\mathbf{z}}_m\bar{\mathbf{z}}_m^\mathsf{T})\Im\{\boldsymbol{\gamma}_m(t)\}=\Im\{s_m(t)\}\bar{\mathbf{q}}_m^{\perp \mathbf{z}}.
\end{align}
By aggregating the information across $t\in\mathcal{T}$, it can be further derived that:
\begin{align}\label{eq::60}
	(\mathbf{I}-\bar{\mathbf{z}}_m\bar{\mathbf{z}}_m^\mathsf{T})\mathbf{G}_m=\bar{\mathbf{q}}_m^{\perp \mathbf{z}}\boldsymbol{\xi}_m^\mathsf{T},
\end{align}
where 
\begin{align}\label{eq:::61}
	&\mathbf{G}_m=\left[\Re\{\boldsymbol{\gamma}_m(1)\},\Im\{\boldsymbol{\gamma}_m(1)\},\Re\{\boldsymbol{\gamma}_m(2)\},\Im\{\boldsymbol{\gamma}_m(2)\},\right.\nonumber\\ 
	&\left.\quad\quad\quad...,\Re\{\boldsymbol{\gamma}_m(T)\},\Im\{\boldsymbol{\gamma}_m(T)\}\right]\in\mathbb{R}^{3\times 2T},\\
	\label{eq:::62}
	&\boldsymbol{\xi}_m=\left[\Re\{s_m(1)\},\Im\{s_m(1)\},\Re\{s_m(2)\},\Im\{s_m(2)\},\right.\nonumber\\
	&\left.\quad\quad\quad...,\Re\{s_m(T)\},\Im\{s_m(T)\}\right]^\mathsf{T}\in\mathbb{R}^{2T}.
\end{align}
In the absence of noise, the rank of Eq. \eqref{eq::60} equals to $1$. In the presence of noise, estimating the direction of $\bar{\mathbf{q}}_m^{\perp \mathbf{z}}$ becomes a least square rank-1 approximation problem. In this case, the optimal estimate of  $\bar{\mathbf{q}}_m^{\perp \mathbf{z}}$ corresponds to (up to a scaling factor) the first left singular vector of matrix $(\mathbf{I}-\bar{\mathbf{z}}_m\bar{\mathbf{z}}_m^\mathsf{T})\mathbf{G}_m$, which is formulated by:
\begin{align}\label{eq:::63}
	\exists \kappa\in\mathbb{R},~ \hat{\bar{\mathbf{q}}}_m^{\perp \mathbf{z}}=\kappa[\tilde{\mathbf{U}}]_{:,1},
\end{align}
where $\tilde{\mathbf{U}}$ is derived by the SVD decomposition $(\mathbf{I}-\bar{\mathbf{z}}_m\bar{\mathbf{z}}_m^\mathsf{T})\mathbf{G}_m=\tilde{\mathbf{U}}\boldsymbol{\Sigma}\tilde{\mathbf{V}}^\mathsf{H}$. Furthermore, the full attitude $\hat{\bar{\mathbf{q}}}_m$ should takes the form of:
\begin{align}
	\hat{\bar{\mathbf{q}}}_m=\kappa_1\tilde{\mathbf{u}}_1+\kappa_2\bar{\mathbf{z}}_m,
\end{align}
where $\kappa_1,\kappa_2\in\mathbb{R}$ satisfying $\kappa_1^2+\kappa_2^2=1$.

\subsubsection{Special Cases with Known Target Information}
In this case, $\boldsymbol{\xi}_m$ is known, and the attitude vector can be directly estimated. 
Starting from Eq.~\eqref{eq::60}, right-multiplying both sides by $\boldsymbol{\xi}_m$ yields  
\begin{align}
	(\mathbf{I}-\bar{\mathbf{z}}_m\bar{\mathbf{z}}_m^\mathsf{T})\mathbf{G}_m\boldsymbol{\xi}_m
	= \bar{\mathbf{q}}_m^{\perp \mathbf{z}}\,(\boldsymbol{\xi}_m^\mathsf{T}\boldsymbol{\xi}_m).
\end{align}
Evidently, as long as there exists $t \in \mathcal{T}$ such that $s_m(t) \neq 0$, implying that the $m$-th source current is not identically zero, $\bar{\mathbf{q}}_m^{\perp \mathbf{z}}$ can be estimated in the following closed form:
\begin{align}\label{eq:::65}
	\hat{\bar{\mathbf{q}}}_m^{\perp \mathbf{z}}
	= \frac{(\mathbf{I}-\bar{\mathbf{z}}_m\bar{\mathbf{z}}_m^\mathsf{T})\mathbf{G}_m\boldsymbol{\xi}_m}{\|\boldsymbol{\xi}_m\|_2^2}.
\end{align}
Given that $\|\bar{\mathbf{q}}_m\|_2=1$, the norm of the parallel component can be obtained as  
\begin{align}
	\|\hat{\bar{\mathbf{q}}}_m^{\parallel \mathbf{z}}\|_2
	&= \sqrt{1-\|\hat{\bar{\mathbf{q}}}_m^{\perp \mathbf{z}}\|_2^2} \nonumber\\
	&= \frac{\sqrt{\|\boldsymbol{\xi}_m\|_2^4 - \boldsymbol{\xi}_m^\mathsf{T}\mathbf{G}_m^\mathsf{T}(\mathbf{I}-\bar{\mathbf{z}}_m\bar{\mathbf{z}}_m^\mathsf{T})\mathbf{G}_m\boldsymbol{\xi}_m}}{\|\boldsymbol{\xi}_m\|_2^2}.
\end{align}
Since $\bar{\mathbf{q}}_m^{\parallel \mathbf{z}}$ is parallel to $\bar{\mathbf{z}}_m$, it follows that  
\begin{align}\label{eq:::68}
	\hat{\bar{\mathbf{q}}}_m^{\parallel \mathbf{z}} = \pm \|\hat{\bar{\mathbf{q}}}_m^{\parallel \mathbf{z}}\|_2 \cdot \bar{\mathbf{z}}_m.
\end{align}
Finally, the full attitude vector of the $m$-th source is given by  
\begin{align}\label{eq:::69}
	\hat{\bar{\mathbf{q}}}_m
	&= \hat{\bar{\mathbf{q}}}_m^{\parallel \mathbf{z}} + \hat{\bar{\mathbf{q}}}_m^{\perp \mathbf{z}} \nonumber\\
	&= \frac{1}{\|\boldsymbol{\xi}_m\|_2^2}(\mathbf{I}-\bar{\mathbf{z}}_m\bar{\mathbf{z}}_m^\mathsf{T})\mathbf{G}_m\boldsymbol{\xi}_m\pm \|\hat{\bar{\mathbf{q}}}_m^{\parallel \mathbf{z}}\|_2 \cdot \bar{\mathbf{z}}_m.
\end{align}

\begin{remark}
	\normalfont It is worth noting that an \emph{attitude ambiguity} arises in this estimation process. This ambiguity originates from the fact that the orthogonal projection matrix $\mathbf{I}-\bar{\mathbf{z}}_m\bar{\mathbf{z}}_m^\mathsf{T}$ in Eq.~\eqref{eq:6} removes the directional information along $\bar{\mathbf{z}}_m$, thereby inducing a phase ambiguity in the received signal. This phenomena fundamentally origin from the transverse nature of EM waves in free space, i.e., the electric and magnetic fields are always orthogonal to the propagation direction. Consequently, two symmetric feasible solutions exist. From a geometric perspective, if $\bar{\mathbf{q}}_m$ is a valid solution, then $-\bar{\mathbf{q}}_m$ produces the same projected component in the orthogonal subspace, leading to an inherent $\pm$ ambiguity in the attitude estimation.
\end{remark}

We summarize the main procedures of the proposed attitude estimation method in Algorithm \ref{algorithm::2}.
\begin{algorithm}[t]
	\caption{The Proposed Attitude Estimation Algorithm}
	\begin{algorithmic}[1]\label{algorithm::2}
		\REQUIRE DOA matrix $\mathbf{A}(\boldsymbol{\theta},\boldsymbol{\phi})$; received signal $\mathbf{X}(t)$; known or unknown signal snapshots $s_m(t)$ for $m\in\mathcal{M}$ and $t\in\mathcal{T}$.
		\ENSURE Attitude estimates $\hat{\bar{\mathbf{q}}}_m$.
		
		\STATE Compute $\mathbf{Q}(\boldsymbol{\theta},\boldsymbol{\phi},t)$ for $t\in\mathcal{T}$ using Eq. \eqref{eq:48}
		\STATE Compute $\boldsymbol{\Xi}$ using Eq. \eqref{eq:::54xi}
		\STATE Compute $\hat{\boldsymbol{\Gamma}}$ by using Eq. \eqref{eq:computga}
		\IF{$s_m(t)$ is unknown for $m\in\mathcal{M}$ and $t\in\mathcal{T}$}
		\FOR{$m\in\mathcal{M}$}
		\STATE Compute $\mathbf{G}_m$ using Eq. \eqref{eq:::61}
		\STATE Compute the SVD of matrix $(\mathbf{I}-\bar{\mathbf{z}}_m\bar{\mathbf{z}}_m^\mathsf{T})\mathbf{G}_m$
		\STATE Compute the estimated attitude $\hat{\bar{\mathbf{q}}}_m$ using Eq. \eqref{eq:::63}
		\ENDFOR
		\ELSIF{$s_m(t)$ is unknown for $m\in\mathcal{M}$ and $t\in\mathcal{T}$}
		\FOR{$m\in\mathcal{M}$}
		\STATE Compute $\hat{\bar{\mathbf{q}}}_m^{\perp \mathbf{z}}$ using Eq. \eqref{eq:::65}
		\STATE Compute $\hat{\bar{\mathbf{q}}}_m^{\parallel \mathbf{z}}$ using Eq. \eqref{eq:::68}
		\STATE Compute the estimated attitude $\hat{\bar{\mathbf{q}}}_m$ using Eq. \eqref{eq:::69}
		\ENDFOR
		\ENDIF
		\RETURN $\hat{\bar{\mathbf{q}}}_m$.
	\end{algorithmic}
\end{algorithm}


\subsection{Complexity Analysis}\label{III.E}
\subsubsection{DOA Estimation}
In Algorithm~\ref{alg:CAPA-MUSIC}, the calculation of $\theta_k$ and $\omega_k$ in {Step~1} requires a computational cost of $\mathcal{O}(K^2)$. 
For $p,q\in\mathcal{P}$, the formation of $\bar{\mathbf{E}}_p$ in {Step~5} incurs a cost of $\mathcal{O}(K^2T^2)$, while the eigendecomposition in {Step~6} takes $\mathcal{O}(T^3)$. 
The recovery of the noise subspace $\bar{\mathbf{U}}_{pq,2}$ in {Step~7} consumes $\mathcal{O}(TK^4)$. 
Subsequently, the derivation of the MUSIC spectrum over one search grid point has a complexity of $\mathcal{O}(9K^2(T-M))$. 
Given a total of $N_s$ scanning grid points, the overall computational complexity of the proposed DOA estimation algorithm is approximated $\mathcal{O}\left(T^3 + N_sK^2(T-M)\right)$.

In practical implementations, increasing $N_s$ enhances the DOA estimation resolution but inevitably raises the computational burden. To obtain a balance between estimation and implementation feasibility, it is recommended to perform a low-resolution coarse grid scan, and then refine the results using a gradient descent method to achieve high-resolution estimates.

\subsubsection{Attitude Estimation}
In Algorithm \ref{algorithm::2}, $\mathbf{Q}(\boldsymbol{\theta},\boldsymbol{\phi},t)$, assuming that FFT is utilized to calculate matrix $\mathbf{Q}$ with $N_f$ sampling points, Step 1 takes up a complexity of $\mathcal{O}(3TN_f\log N_f)$. The calculation of $\boldsymbol{\Xi}$ and $\hat{\boldsymbol{\Gamma}}$ in Step 2 and costs a complexity of $\mathcal{O}(M^2)$ and $\mathcal{O}(M^3+TM^3)$, respectively. The SVD in Step 7 costs a complexity of $\mathcal{O}(T^2)$. Finally, the overall complexity of attitude estimation is $\mathcal{O}(3TN_flogN_f+TM^3+T^2)$.

\section{Numerical Results} \label{sec:results}
\begin{table}[t]
	\renewcommand\arraystretch{1.2}
	\caption{Parameter settings.}
	\centering\label{running_time}
	\small
	\label{tab:pstt}
	\begin{tabular}{|p{1.6cm}|p{4cm}|p{1.85cm}|}
		\hline
		Notations&Definitions&Values\\ \hline \hline
		${L_x}$&The width of CAPA&$2$m\\ \hline
		$L_y$&The height of CAPA&$2$m\\ \hline
		$\eta_0$&The free-space impedance&$120\pi\Omega$ \\ \hline
		$\sigma^2$&The noise power& $10^{-3}\text{V}^2/\text{m}^2$\\ \hline
		$K$& The dimension of Legendre polynomial& $16$\\ \hline
		$\lambda$ & The wavelength & $0.1$m \\ \hline
		$T$& The number of snapshots & $ 500$\\ \hline
	\end{tabular}
\end{table}

In this section, we provide numerical results to validate the performance of the proposed DOA and attitude sensing algorithm. 

Unless otherwise specified, the following configurations are applied throughout the simulations. Two targets are considered, with their positions and orientations specified as follows: $\mathbf{p}_1 = [-16, -10, 50]^\mathsf{T}$, $\mathbf{p}_2 = [16, -38, 40]^\mathsf{T}$, $\bar{\mathbf{q}}_1 = [0.8, 0.6, 0]^\mathsf{T}$, and $\bar{\mathbf{q}}_2 = [-0.1, 0.7, 0.7071]^\mathsf{T}$, respectively. The simulation parameters are summarized in Table~\ref{running_time}. All simulations are executed on a PC equipped with an Intel i7-13980HX 2.2GHz CPU and 32GB of RAM. The algorithms are implemented in MATLAB R2023b.

\subsection{Benchmarks and Evaluation Metrics}

The following benchmarks are considered in the simulation:

\emph{1) SPDA:} In this benchmark, an SPDA of identical size to the CAPA counterparts is utilized to receive the signals. The corresponding signal model follows the work \cite{10158997}. Notably, the SPDA is composed of discrete antennas with an effective aperture $A_d=\frac{\lambda^2}{4\pi}$ and spacing $l_d=\frac{\lambda}{2}$. Hence, there is a total of $N_d=N_{d,x}N_{d,y}$ antennas for the SPDA with $N_{d,x}=\lceil\frac{L_x}{l_d} \rceil$ and $N_{d,y}=\lceil\frac{L_y}{l_d} \rceil$, and the position of the $(m,n)$-th antenna is given by:
\begin{align}
	\bar{\mathbf{p}}_{m,n}=\left[(m-1)l_d-\frac{L_x}{2},(n-1)l_d-\frac{L_y}{2},0\right]^\mathsf{T}.
\end{align}
Then, the detailed DOA estimation algorithm can be referred to conventional MUSIC algorithms \cite{stoica2002music}.

\emph{2) Single-polarized CAPA:} Without loss of generalization, it is assumed that only the electronic field along X-axis is available, denoted as $e(\mathbf{r})=\mathbf{u}_x^\mathsf{T}\mathbf{e}(\mathbf{r})$ with $\mathbf{u}_x=[1,0,0]^\mathsf{T}$ being the unit directional vector along X-axis. Accordingly, the received signal under single polarization can be expressed as
\begin{align}
	\mathbf{x}(t)=\lim_{A_0\rightarrow 0}A_0 \left[\mathbf{A}(\boldsymbol{\theta},\boldsymbol{\phi})\mathbf{s}(t)+\mathbf{n}(t)\right]\in\mathbb{C}^{N},
\end{align}
where $\mathbf{s}(t)=[s_1(t),s_2(t),...,s_M(t)]^\mathsf{T}$ represents the source signal vector and $\mathbf{n}(t)=[n(\mathbf{r}_1,t),n(\mathbf{r}_2,t),...,n(\mathbf{r}_N,t)]^\mathbf{T}$ denotes the noise vector. Further details of the algorithm design can be found in \cite{si2025doa}.

\emph{3) CRLB:} To highlight the performance boundary of the studied CAPA system, we consider CRLB as another benchmark. In our sensing problem, the parameter vector is defined as $\boldsymbol{\beta}=[\boldsymbol{\theta};\boldsymbol{\phi};\bar{\mathbf{q}}_1;\bar{\mathbf{q}}_2;...;\bar{\mathbf{q}}_M]\in\mathbb{R}^{5M}$, the fisher information matrix (FIM) $\mathbf{J}_{\boldsymbol{\beta}}\in\mathbb{C}^{}$ is defined as follows:
\begin{align}
	\mathbf{J}_{\boldsymbol{\beta}}=\mathbb{E}\left[\left(\frac{\partial}{\partial \boldsymbol{\beta}} \mathcal{L}(\boldsymbol{\beta})\right)\left(\frac{\partial}{\partial \boldsymbol{\beta}}\mathcal{L}(\boldsymbol{\beta})\right)^\mathsf{H}\right],
\end{align}
where  $\mathcal{L}(\boldsymbol{\beta})=\ln p(\mathbf{X}(1),\mathbf{X}(2),...,\mathbf{X}(T);\boldsymbol{\beta})$ is the likelihood function and 
\begin{align}
	&p(\mathbf{X}(1),\mathbf{X}(2),...,\mathbf{X}(T);\boldsymbol{\beta})\nonumber\\
	=&\prod_{t=1}^{T}\frac{1}{(\pi\sigma^{2})^{3N}}
	\exp\!\left(
	-\frac{1}{\sigma^{2}}\big\|\mathbf{X}(t)-\mathbf{A}(\boldsymbol{\theta},\boldsymbol{\phi})\,\mathbf{S}(t)\,\mathbf{V}\big\|_\mathsf{F}^{2}
	\right).
\end{align}
Subsequently, the CRLB for DOA estimation is given by
\begin{align}
	&\mathrm{CRLB}(\theta_m)=\left[\mathbf{J}_{\boldsymbol{\beta}}^{-1}\right]_{m,m},~m\in\mathcal{M},\\
	&\mathrm{CRLB}(\phi_m)=\left[\mathbf{J}_{\boldsymbol{\beta}}^{-1}\right]_{m+M,m+M},~m\in\mathcal{M}.
\end{align}

To assess the performance of the proposed joint DOA and attitude sensing algorithm, we adopt the mean squared error (MSE) for DOA estimation and the mean angular error (MAE) for attitude estimation. Specifically, given the ground-truth azimuth angles $\boldsymbol{\theta}$ and elevation angles $\boldsymbol{\phi}$, the MSE is defined as
\begin{align}
	\mathrm{MSE}(\boldsymbol{\theta}) &= \frac{1}{N_t}\sum_{i=1}^{N_t}\|\boldsymbol{\theta}-\hat{\boldsymbol{\theta}}_i\|_2^2, \\
	\mathrm{MSE}(\boldsymbol{\phi}) &= \frac{1}{N_t}\sum_{i=1}^{N_t}\|\boldsymbol{\phi}-\hat{\boldsymbol{\phi}}_i\|_2^2,
\end{align}
where $N_t$ denotes the total number of Monte Carlo trials, and $\hat{\boldsymbol{\theta}}_i$ and $\hat{\boldsymbol{\phi}}_i$ represent the estimated azimuth and elevation angles in the $i$-th trial, respectively.  

The MAE of the attitude vector $\mathbf{z}$ is defined as
\begin{align}
	\mathrm{MAE}(\mathbf{z}) = \frac{1}{MN_t}\sum_{i=1}^{N_t}\sum_{m=1}^M\arccos\left(\frac{\hat{\mathbf{z}}_{m,i}^\mathsf{T}\mathbf{z}_m}{\|\hat{\mathbf{z}}_{m,i}\|_2\|\mathbf{z}_m\|_2}\right),
\end{align}
where $\hat{\mathbf{z}}_{m,i}$ denotes the estimated attitude vector for the $m$-th target in the $i$-th trial and $\mathbf{z}_m$ is the ground-truth attitude vector.

\subsection{Convergence of the Proposed Algorithm}
\begin{figure}
	\centering
	\includegraphics[width=0.7\linewidth]{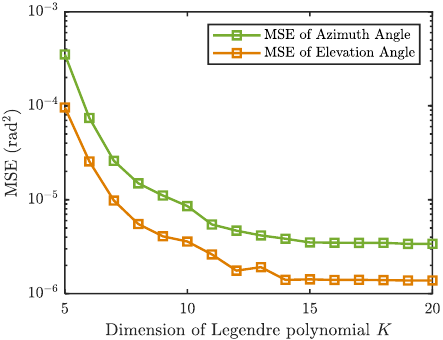}
	\caption{Convergence of the proposed algorithm}
	\label{fig:convergence}
\end{figure}

Fig.~\ref{fig:convergence} illustrates the convergence behavior of the proposed DOA sensing algorithm with respect to the dimension of the Legendre polynomial $K$, where $K$ varies within the interval $[5,20]$. In general, the MSE of DOA estimation decreases as $K$ increases, owing to the improved approximation accuracy of the integral operations involved in the algorithm design. However, once $K$ exceeds 15, the MSEs of both azimuth and elevation angles tend to stabilize at constant values, thereby confirming the convergence of the proposed algorithm. These results demonstrate the effectiveness of our design in achieving a favorable trade-off between performance and computational complexity.

\subsection{DOA Estimation Performance}

\subsubsection{MUSIC Spectrum of Different Methods}

\begin{figure*}
	\centering
	\includegraphics[width=1\linewidth]{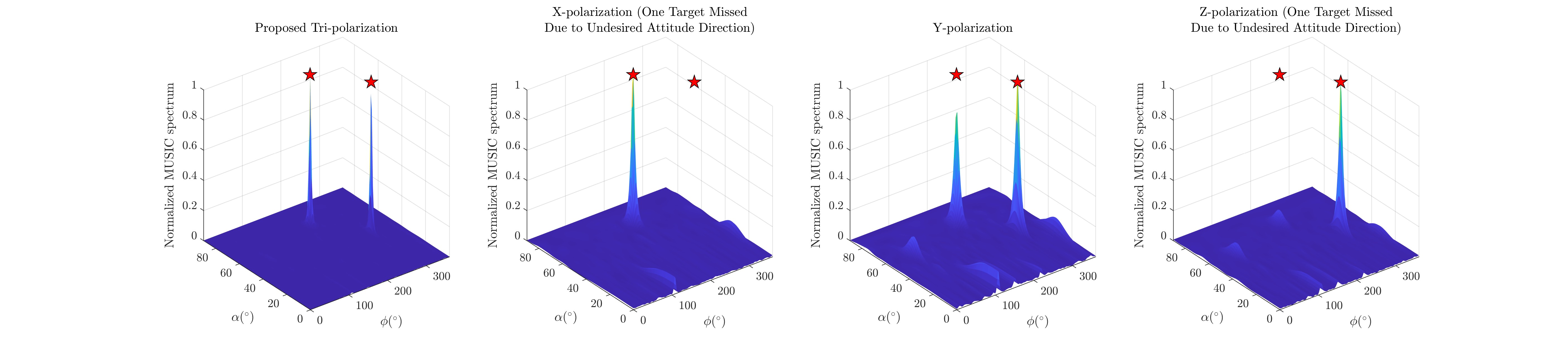}
	\caption{Normalized MUSIC spectrum for CAPA with different polarization}
	\label{fig:spectrumpolarization}
\end{figure*}

Firstly, the MUSIC spectra under different polarization models are illustrated in Fig. \ref{fig:spectrumpolarization}, including the proposed model as well as single-polarization models along the X-, Y-, and Z-axes. The actual DOA values are marked by red stars in this figure. It is observed that for the X- and Z-axis polarization cases, one of the targets cannot be detected, since the target’s dipole orientation is unfavorable and the corresponding signal component is buried in noise. In contrast, the proposed method successfully recovers the peaks of both targets by leveraging the developed tri-polarized spectrum, which demonstrates robustness against specific target orientations. Furthermore, the peaks obtained with the proposed method are obviously sharper than those from the single-polarization models, which is mainly attributed to the joint exploitation of self- and cross-covariance matrices, leading to enhanced DOA resolution.

\subsubsection{Performance Under Different SNR Levels}

\begin{figure}
	\centering
	\includegraphics[width=0.8\linewidth]{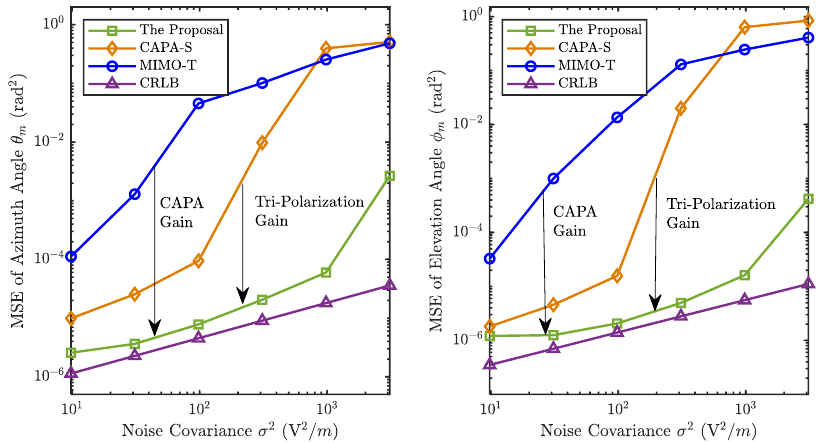}
	\caption{DOA estimation versus noise covariance}
	\label{fig:doanoise}
\end{figure}

The DOA estimation performance under different SNR levels is illustrated in Fig.~\ref{fig:doanoise}, where the MSE of both azimuth and elevation angles are presented versus the noise covariance $\sigma^2$. As shown, the proposed method consistently outperforms the benchmark systems across all SNR regimes. In particular, compared to the single-polarized CAPA (CAPA-S), the proposed tri-polarized design exhibits substantial performance gains, which can be attributed to the exploitation of richer information contained in different polarized components. This observation verifies the effectiveness of polarization diversity in enhancing estimation robustness. Moreover, the superiority of CAPA-based schemes over the conventional MIMO-T approach further demonstrates the advantages of continuous aperture modeling, especially in the low-to-medium SNR range. Finally, the proposed algorithm closely approaches the CRLB, validating its statistical efficiency and near-optimal estimation capability.

\subsubsection{Performance Versus Different Number of Snapshots}

\begin{figure}
	\centering
	\includegraphics[width=0.8\linewidth]{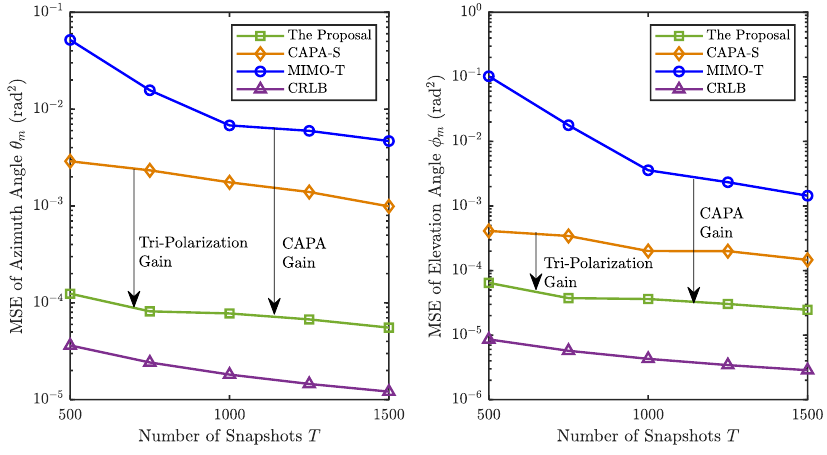}
	\caption{DOA estimation versus number of snapshots}
	\label{fig:doasnapshots}
\end{figure}

The number of snapshots also has a significant impact on DOA estimation accuracy. Fig.~\ref{fig:doasnapshots} illustrates the MSE performance of both azimuth and elevation angle estimations as the number of snapshots $T$ varies within the interval $[500,1500]$. As expected, the estimation error decreases monotonically with more snapshots, since additional temporal samples provide more reliable statistical estimates of the covariance matrices. Among the compared schemes, the proposed tri-polarized CAPA consistently achieves the lowest MSE, approaching the CRLB across all snapshot numbers. In contrast, the single-polarized CAPA (CAPA-S) shows higher estimation errors due to the loss of polarization diversity, while the conventional MIMO-T exhibits the worst performance owing to its limited aperture resolution. These results confirm that the proposed algorithm not only leverages the continuous aperture gain of CAPA but also benefits from the additional polarization gain, leading to superior robustness even in scenarios with relatively few snapshots.

\subsubsection{Performance Versus Different Aperture Sizes}
\begin{figure*}
	\centering
	\includegraphics[width=0.9\linewidth]{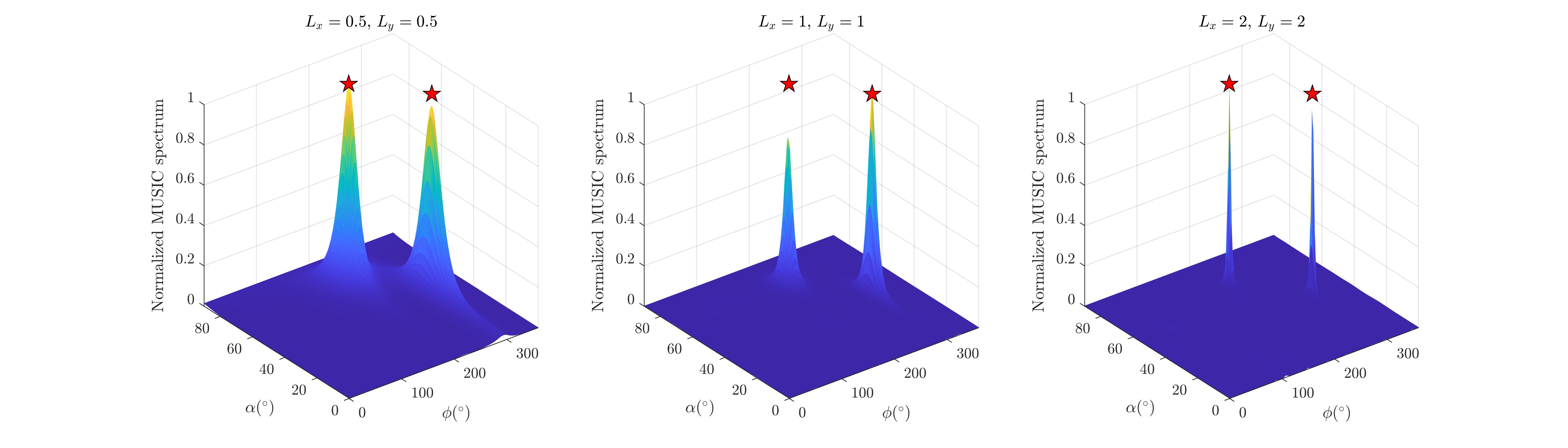}
	\caption{Normalized MUSIC spectrum for CAPA with different aperture size}
	\label{fig:music_pol}
\end{figure*}

The size of CAPA aperture significantly affects the DOA estimation performance, which equals to $L_xL_y$. In Fig. \ref{fig:music_pol}, we illustrate the MUSIC spectrum under three different CAPA sizes: $L_x=L_y=0.5$, $L_x=L_y=1$, and $L_x=L_y=2$. Similarly, we mark the azimuth and elevation angles of targets with red stars in this figure. Obviously, as the increase of the aperture size, the peak of MUSIC spectrum becomes sharper, demonstrating higher DOA resolution and estimation accuracy. This result conforms to the theoretical performance analyses in our previous work \cite{si2025doa}.

\subsubsection{Performance Versus Different Number of Targets}
\begin{figure}
	\centering
	\includegraphics[width=0.9\linewidth]{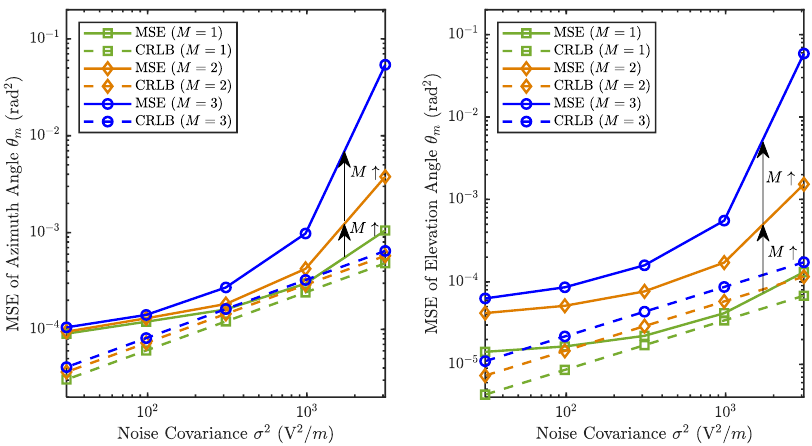}
	\caption{DOA estimation in different number of targets}
	\label{fig:doanumbertargets}
\end{figure}
Fig.~\ref{fig:doanumbertargets} illustrates the DOA estimation performance under different numbers of targets, i.e., $M=1,2,3$. The target positions are set as $\mathbf{p}_1 = [-16, -10, 50]^\mathsf{T}$, $\mathbf{p}_2 = [16, -38, 40]^\mathsf{T}$, and $\mathbf{p}_3=[18, 7.5, 18]^\mathsf{T}$, while the corresponding attitudes are configured as $\bar{\mathbf{q}}_2 = [-0.1, 0.7, 0.7071]^\mathsf{T}$ and $\bar{\mathbf{q}}_3=[-0.8, 0.2, -0.57]^\mathsf{T}$. For the cases of $M=1$ and $M=2$, only the first target and the first two targets are activated, respectively.

As shown in the figure, the MSE of both azimuth and elevation angle estimations increases with the number of targets. This degradation primarily stems from stronger inter-target interference and the reduced effective aperture gain available per target, which make subspace separation more challenging. Despite this, the proposed algorithm achieves consistently satisfactory accuracy, remaining close to the CRLB across all considered target numbers. Even in the case of three targets, the performance gap between the proposed method and the CRLB remains relatively small, which highlights both the robustness and scalability of the proposed approach in multi-target scenarios.

\begin{figure}
	\centering
	\includegraphics[width=0.65\linewidth]{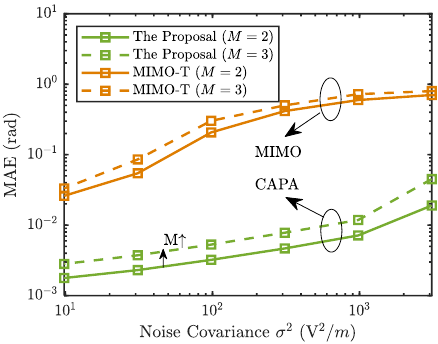}
	\caption{MAE versus noise covariance}
	\label{fig:attitudesnr}
\end{figure}

\subsection{Attitude Estimation Accuracy}
According to the theoretical analyses, when the snapshot information of targets is unknown, only the direction of perdendicular components $\bar{\mathbf{q}}^{\perp \mathbf{z}}_m$ can be estimated. Hence, we first present the MAE of $\bar{\mathbf{q}}^{\perp \mathbf{z}}_m$ to assess the performance of attitude estimation.

\subsubsection{MAE Versus Different SNR}

Fig.~\ref{fig:attitudesnr} presents the MAE performance of attitude estimation versus different noise covariance levels. As discussed earlier, the single-polarized CAPA treats source targets as far-field points, thereby neglecting the polarization-dependent structure of the EM field and making attitude information inestimable. Therefore, only the proposed tri-polarized CAPA algorithm and the MIMO-T scheme are compared here.

It can be observed that the proposed method consistently achieves a significantly lower MAE than the conventional MIMO-T, across all SNR levels. This advantage stems from the continuous aperture gain of CAPA, which provides higher spatial resolution, as well as the tri-polarization gain, which enables effective extraction of attitude-related information. Moreover, as the number of targets increases from $M=2$ to $M=3$, the MAE performance of both schemes degrades slightly due to enhanced inter-target interference. Nevertheless, the proposed algorithm maintains satisfactory estimation accuracy even under low-SNR conditions, demonstrating strong robustness against noise and scalability to multi-target scenarios.

\subsubsection{Visualization of Attitude Ambiguity}
\begin{figure}
	\centering
	\includegraphics[width=0.75\linewidth]{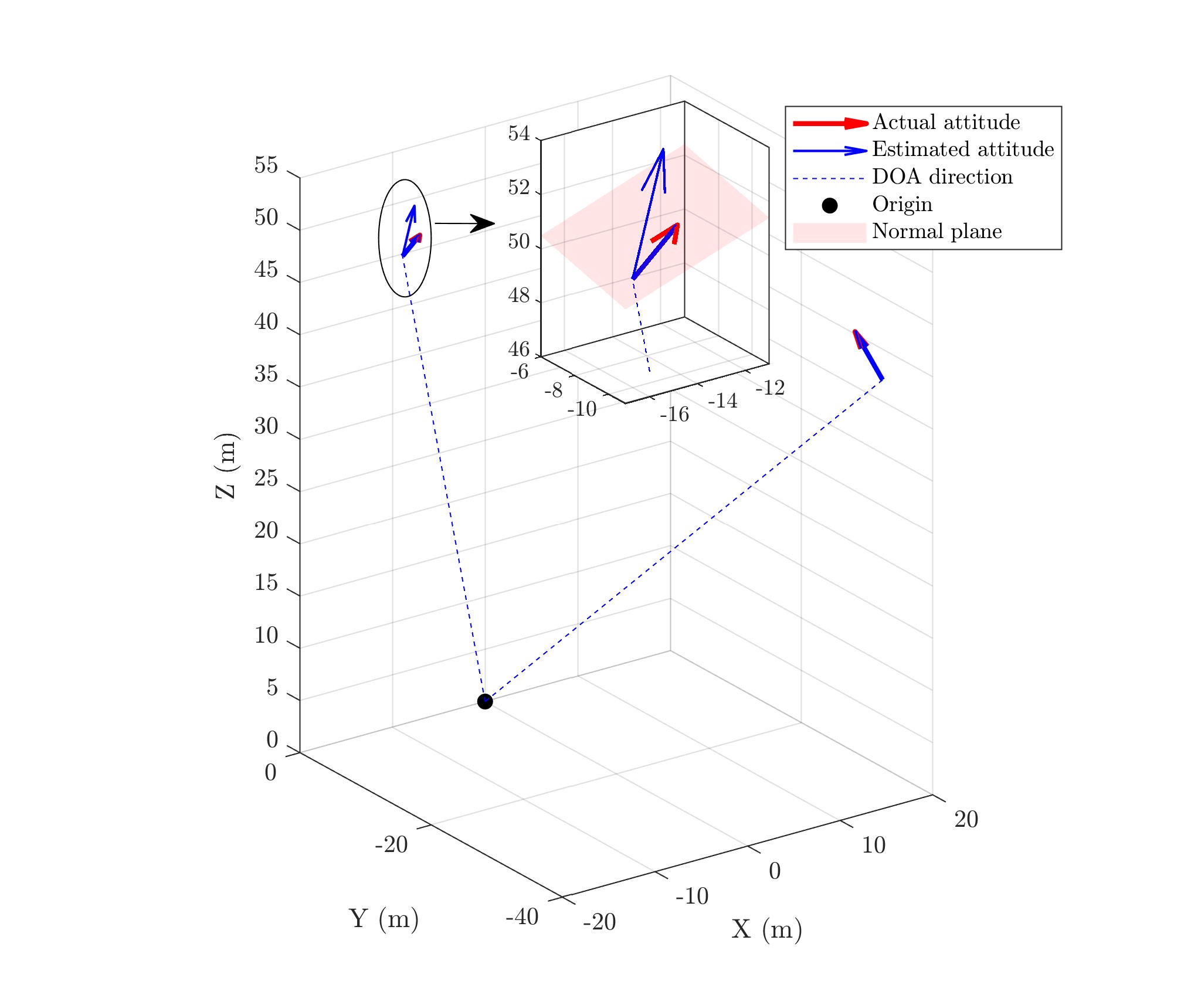}
	\caption{Illustration of attitude ambiguity}
	\label{fig:attitudevisualization}
\end{figure}

Furthermore, the attitude estimation performance with signal snapshots is illustrated in Fig.~\ref{fig:attitudevisualization}. As shown, an attitude ambiguity arises because the directional information along the DOA vector $\bar{\mathbf{z}}_m$ is eliminated by the orthogonal projection $\mathbf{I}-\bar{\mathbf{z}}_m\bar{\mathbf{z}}_m^\mathsf{T}$. Consequently, each target admits two possible attitude solutions that are symmetric with respect to the subspace orthogonal to the DOA direction. This phenomenon stems from the transverse nature of electromagnetic waves in free space and further validate the theoretical analysis.

\section{Conclusions} \label{sec:conclusion}
In this paper, we studied the joint DOA and attitude sensing algorithm for tri-polarized CAPA systems. By establishing an EM information-theoretic model for spatially continuous received signals, we proposed a unified framework that extends CAPA sensing beyond conventional DOA estimation. To address the challenge of subspace decomposition with continuous apertures, an equivalent continuous–discrete transformation method was developed, enabling efficient eigenvalue decomposition. Leveraging both self- and cross-covariance matrices of tri-polarized signals, we further constructed a novel tri-polarized spectrum that significantly enhances DOA estimation resolution and robustness. Based on the obtained DOA estimates, we analyzed the identifiability of target attitude information and designed tailored estimation algorithms for scenarios with and without prior target snapshots. Extensive simulation results demonstrated that the proposed approach achieves near-optimal performance close to the CRLB, while maintaining robustness under various system settings.  In the near future, we will further focus on the sensing of more diverse information of targets in CAPA systems.

\bibliographystyle{IEEEtran}
\bibliography{reference/mybib}

\begin{thebibliography}{10}
\providecommand{\url}[1]{#1}
\csname url@samestyle\endcsname
\providecommand{\newblock}{\relax}
\providecommand{\bibinfo}[2]{#2}
\providecommand{\BIBentrySTDinterwordspacing}{\spaceskip=0pt\relax}
\providecommand{\BIBentryALTinterwordstretchfactor}{4}
\providecommand{\BIBentryALTinterwordspacing}{\spaceskip=\fontdimen2\font plus
\BIBentryALTinterwordstretchfactor\fontdimen3\font minus
  \fontdimen4\font\relax}
\providecommand{\BIBforeignlanguage}[2]{{%
\expandafter\ifx\csname l@#1\endcsname\relax
\typeout{** WARNING: IEEEtran.bst: No hyphenation pattern has been}%
\typeout{** loaded for the language `#1'. Using the pattern for}%
\typeout{** the default language instead.}%
\else
\language=\csname l@#1\endcsname
\fi
#2}}
\providecommand{\BIBdecl}{\relax}
\BIBdecl

\bibitem{10816694}
N.~Zhao, Y.~Sun, Y.~Pei, and D.~Niyato, ``Joint sensing and computation
  incentive mechanism for mobile crowdsensing networks: A multiagent
  reinforcement learning approach,'' \emph{IEEE Internet Things J.}, vol.~12,
  no.~9, pp. 13\,033--13\,046, May 2025.

\bibitem{10918620}
O.~Kanhere and T.~S. Rappaport, ``Map-assisted millimeter wave and terahertz
  position location and sensing,'' \emph{{IEEE} Trans. Wireless Commun.},
  vol.~24, no.~6, pp. 5323--5336, Jun. 2025.

\bibitem{10255711}
Y.~Cui, W.~Yuan, Z.~Zhang, J.~Mu, and X.~Li, ``On the physical layer of digital
  twin: An integrated sensing and communications perspective,'' \emph{{IEEE} J.
  Sel. Areas Commun.}, vol.~41, no.~11, pp. 3474--3490, Nov. 2023.

\bibitem{10906057}
X.~Gong, X.~Liu, A.-A. Lu, X.~Gao, X.-G. Xia, C.-X. Wang, and X.~You, ``Digital
  twin of channel: Diffusion model for sensing-assisted statistical channel
  state information generation,'' vol.~24, no.~5, pp. 3805--3821, May 2025.

\bibitem{7533517}
C.~Wang, ``Beyond one-way communication: Degrees of freedom of multi-way relay
  {MIMO} interference networks,'' \emph{{IEEE} Trans. Wireless Commun.},
  vol.~15, no.~10, pp. 7174--7186, Oct. 2016.

\bibitem{6922542}
G.~Koliander, E.~Riegler, G.~Durisi, and F.~Hlawatsch, ``Degrees of freedom of
  generic block-fading {MIMO} channels without a priori channel state
  information,'' \emph{{IEEE} Trans. Inf. Theory}, vol.~60, no.~12, pp.
  7760--7781, Dec. 2014.

\bibitem{10678869}
Y.~Yang, C.-X. Wang, J.~Huang, J.~Thompson, and H.~Vincent~Poor, ``A {3D}
  continuous-space electromagnetic channel model for {6G} tri-polarized
  multi-user communications,'' \emph{{IEEE} Trans. Wireless Commun.}, vol.~23,
  no.~11, pp. 17\,354--17\,367, Nov. 2024.

\bibitem{yang20153}
X.~Yang, G.~Wen, J.~Zhong, B.~Hui, and C.~Ma, ``A 3-{D}
  electromagnetic-model-based algorithm for absolute attitude measurement using
  wideband radar,'' \emph{IEEE Geosci. Remote Sens. Lett.}, vol.~12, no.~9, pp.
  1878--1882, Sep. 2015.

\bibitem{zhu2024electromagnetic}
J.~Zhu, Z.~Wan, L.~Dai, M.~Debbah, and H.~V. Poor, ``Electromagnetic
  information theory: Fundamentals, modeling, applications, and open
  problems,'' \emph{{IEEE} Wireless Commun.}, vol.~31, no.~3, pp. 156--162,
  Jun. 2024.

\bibitem{miller2000communicating}
D.~A. Miller, ``Communicating with waves between volumes: Evaluating orthogonal
  spatial channels and limits on coupling strengths,'' \emph{Appl. Opt.},
  vol.~39, no.~11, pp. 1681--1699, Apr. 2000.

\bibitem{poon2005degrees}
A.~S. Poon, R.~W. Brodersen, and D.~N. Tse, ``Degrees of freedom in
  multiple-antenna channels: A signal space approach,'' \emph{{IEEE} Trans.
  Inf. Theory}, vol.~51, no.~2, pp. 523--536, Feb. 2005.

\bibitem{dardari2020communicating}
D.~Dardari, ``Communicating with large intelligent surfaces: Fundamental limits
  and models,'' \emph{{IEEE} J. Sel. Areas Commun.}, vol.~38, no.~11, pp.
  2526--2537, Jul. 2020.

\bibitem{zhang2022holographic}
H.~Zhang, H.~Zhang, B.~Di, M.~Di~Renzo, Z.~Han, H.~V. Poor, and L.~Song,
  ``Holographic integrated sensing and communication,'' \emph{{IEEE} J. Sel.
  Areas Commun.}, vol.~40, no.~7, pp. 2114--2130, Jul. 2022.

\bibitem{d2022cramer}
A.~A. D'Amico, A.~de~Jesus~Torres, L.~Sanguinetti, and M.~Win,
  ``Cram{\'e}r-{Rao} bounds for holographic positioning,'' \emph{{IEEE} Trans.
  Signal Process.}, vol.~70, pp. 5518--5532, Nov. 2022.

\bibitem{9757375}
F.~A. Pereira~de Figueiredo, ``An overview of massive {MIMO} for {5G} and
  {6G},'' \emph{IEEE Lat. Am. Trans.}, vol.~20, no.~6, pp. 931--940, Jun. 2022.

\bibitem{11026007}
E.~Björnson, F.~Kara, N.~Kolomvakis, A.~Kosasih, P.~Ramezani, and M.~B.
  Salman, ``Enabling {6G} performance in the upper mid-band by transitioning
  from massive to gigantic {MIMO},'' \emph{IEEE Open J. Commun. Soc.}, vol.~6,
  pp. 5450--5463, Jun. 2025.

\bibitem{11095329}
Y.~Liu, C.~Ouyang, Z.~Wang, J.~Xu, X.~Mu, and Z.~Ding, ``{CAPA}:
  Continuous-aperture arrays for revolutionizing {6G} wireless
  communications,'' \emph{{IEEE} Wireless Commun.}, vol.~32, no.~4, pp. 38--45,
  Aug. 2025.

\bibitem{11015930}
Y.~Guo, Y.~Zhang, Z.~Wang, M.~Shang, Y.~Chen, and Y.~Liu, ``Impact of array
  orientations on degrees of freedom for continuous aperture arrays ({CAPA}),''
  \emph{{IEEE} Wireless Commun. Lett.}, vol.~14, no.~8, pp. 2476--2480, Aug.
  2025.

\bibitem{si2025doa}
H.~Si, Z.~Wang, X.~Guo, J.~Zhang, and Y.~Liu, ``{DOA} estimation via continuous
  aperture arrays: {MUSIC} and {CRLB},'' \emph{arXiv preprint
  arXiv:2507.21347}, 2025.

\bibitem{jiang2024cram}
H.~Jiang, Z.~Wang, Y.~Liu, and A.~Nallanathan, ``Cram{\'e}r-{Rao} bound
  optimization for near-field sensing with continuous-aperture arrays,''
  \emph{arXiv preprint arXiv:2412.15007}, 2024.

\bibitem{10556596}
A.~Chen, L.~Chen, Y.~Chen, N.~Zhao, and C.~You, ``Near-field positioning and
  attitude sensing based on electromagnetic propagation modeling,''
  \emph{{IEEE} J. Sel. Areas Commun.}, vol.~42, no.~9, pp. 2179--2195, Sep.
  2024.

\bibitem{8240645}
A.~Shahmansoori, G.~E. Garcia, G.~Destino, G.~Seco-Granados, and H.~Wymeersch,
  ``Position and orientation estimation through millimeter-wave {MIMO} in {5G}
  systems,'' \emph{{IEEE} Trans. Wireless Commun.}, vol.~17, no.~3, pp.
  1822--1835, Mar. 2018.

\bibitem{1386525}
A.~Poon, R.~Brodersen, and D.~Tse, ``Degrees of freedom in multiple-antenna
  channels: A signal space approach,'' \emph{{IEEE} Trans. Inf. Theory},
  vol.~51, no.~2, pp. 523--536, Feb. 2005.

\bibitem{10910020}
Z.~Wang, C.~Ouyang, and Y.~Liu, ``Beamforming optimization for continuous
  aperture array ({CAPA})-based communications,'' \emph{{IEEE} Trans. Wireless
  Commun.}, vol.~24, no.~6, pp. 5099--5113, Jun. 2025.

\bibitem{9784869}
D.~Schenck, X.~Mestre, and M.~Pesavento, ``Probability of resolution of {MUSIC}
  and g-{MUSIC}: An asymptotic approach,'' \emph{{IEEE} Trans. Signal
  Process.}, vol.~70, pp. 3566--3581, May 2022.

\bibitem{9695358}
P.~Chen, Z.~Chen, B.~Zheng, and X.~Wang, ``Efficient {DOA} estimation method
  for reconfigurable intelligent surfaces aided uav swarm,'' \emph{{IEEE}
  Trans. Signal Process.}, vol.~70, pp. 743--755, Jan. 2022.

\bibitem{stoica2002music}
P.~Stoica and A.~Nehorai, ``{MUSIC}, maximum likelihood, and {Cramer-Rao}
  bound,'' \emph{IEEE Trans. Acoust. Speech Signal Process.}, vol.~37, no.~5,
  pp. 720--741, May 1989.

\bibitem{9384289}
M.~Wagner, Y.~Park, and P.~Gerstoft, ``Gridless {DOA} estimation and
  root-{MUSIC} for non-uniform linear arrays,'' \emph{{IEEE} Trans. Signal
  Process.}, vol.~69, pp. 2144--2157, Mar. 2021.

\bibitem{swarztrauber2003computing}
P.~N. Swarztrauber, ``On computing the points and weights for gauss--legendre
  quadrature,'' \emph{SIAM J. Sci. Comput.}, vol.~24, no.~3, pp. 945--954,
  2003.

\bibitem{choudhary2014identifiability}
S.~Choudhary and U.~Mitra, ``Identifiability scaling laws in bilinear inverse
  problems,'' \emph{arXiv preprint arXiv:1402.2637}, 2014.

\bibitem{10158997}
Z.~Zhang and L.~Dai, ``Pattern-division multiplexing for multi-user
  continuous-aperture {MIMO},'' \emph{{IEEE} J. Sel. Areas Commun.}, vol.~41,
  no.~8, pp. 2350--2366, Aug. 2023.

\end{thebibliography}

\end{document}